%%%%%%%%%%%%%%%%%%%%%%%%%%%%%%%%%%%%%%
% One Column
%%%%%%%%%%%%%%%%%%%%%%%%%%%%%%%%%%%%%%
\documentclass[smallabstract,smallcaptions]{dccpaper}

\usepackage{epsfig}
\usepackage{citesort}
\usepackage{amsmath}
\usepackage{amssymb}
\usepackage{color}
\usepackage{url}

\usepackage{cite}

\usepackage{enumitem}

\newlength{\figurewidth}
\newlength{\smallfigurewidth}

%%%%%%%%%%%%%%%%%%%%%%%%%%%%%%%%%%%%%%
% One Column
%%%%%%%%%%%%%%%%%%%%%%%%%%%%%%%%%%%%%%
\setlength{\smallfigurewidth}{2.75in}
\setlength{\figurewidth}{6in}

%%%%%%%%%%%%%%%%%%%%%%%%%%%%%%%%%%%%%%
\newtheorem{definition}{Definition}
\newtheorem{theorem}{Theorem}
\newtheorem{lemma}{Lemma}

\newtheorem{corollary}{Corollary}

\newenvironment{proof}{\trivlist\item[]\emph{Proof}:}%
{\unskip\nobreak\hskip 1em plus 1fil\nobreak$\Box$
	\parfillskip=0pt%
	\endtrivlist}

\def\idtt#1{\ensuremath{\mathtt{#1}}}
\def\occ{\idtt{occ}}
\def\hp\_leaf{\idtt{hp\_leaf}}
\def\hproot{\idtt{hp\_root}}
\def\str{\idtt{str}}
\def\partner{\idtt{partner}}
\def\path{\idtt{Path}}
\def\loci{\idtt{loci}}
\def\SpecialLeaves{\idtt{SpecialLeaves}}
\def\SpecialSkylineList{\idtt{SpecialSkylineList}}
\def\Special{\idtt{Special}}
\def\lMost{\idtt{lMost}}
\def\rMost{\idtt{rMost}}
\def\lca{\idtt{lca}}

\def\len{\idtt{len}}
\def\parent{\idtt{parent}}
\def\Null{\idtt{NULL}}
\def\rank{\idtt{rank}}
\def\size{\idtt{size}}
\def\select{\idtt{select}}
\def\sp{\idtt{SP}}
\def\root{\idtt{root}}
\def\desc{\idtt{desc}}

%\usepackage{caption}
%%\usepackage[linesnumbered, noend]{algorithm2e}
%\usepackage[linesnumbered]{algorithm2e}
%\makeatletter
%\renewcommand{\@algocf@capt@plain}{above}% formerly {bottom}
%\makeatother

\usepackage{xpatch}
\usepackage{algorithm}
\usepackage[noend]{algorithmic}
\usepackage{caption}
\captionsetup[algorithm]{labelformat=empty}
%\algsetup{linenosize=\tiny}

\xpatchcmd{\algorithmic}{\setcounter}{\algorithmicfont\setcounter}{}{}
\providecommand{\algorithmicfont}{}
\providecommand{\setalgorithmicfont}[1]{\renewcommand{\algorithmicfont}{#1}}

\setalgorithmicfont{\footnotesize}

%\usepackage{algorithm}
%\usepackage{caption}
%\usepackage[noend]{algpseudocode}
%\captionsetup[algorithm]{labelformat=empty}

\begin{document}
	
	\title
	{\large
		\textbf{Computing Matching Statistics on Repetitive Texts}
	}

	\author{%
		Younan Gao\\[0.5em]
		{\small\begin{minipage}{\linewidth}\begin{center}
					\begin{tabular}{c}
						%$^{\ast}$Faculty of Computer Science\\
						Dalhousie University \\
						Halifax, Canada\\
						\url{yn803382@dal.ca}
					\end{tabular}
		\end{center}\end{minipage}}
	}

	\maketitle
	\thispagestyle{empty}

	\begin{abstract}
		Computing the {\em matching statistics} of a string $P[1..m]$ with respect to a text $T[1..n]$ is a fundamental problem which has application to genome sequence comparison.
		In this paper, we study the problem of computing the matching statistics upon highly repetitive texts.
		We design three different data structures that are similar to LZ-compressed indexes. 
		The space costs of all of them can be measured by $\gamma$, the size of the smallest string attractor [STOC'2018] and $\delta$, a better measure of repetitiveness  [LATIN'2020]. 	
		%$\delta$, as a relevant compressibility measure for repetitive texts, was recently introduced by Kociumaka et al.~\cite{kociumaka2021towards}.
		%And they also proved that $\delta$ is an even smaller measure than $\gamma$, the smallest {\em string attractor} \cite{kempa2018roots}.
		%In this paper, we study the problem of computing the matching statistics upon highly repetitive texts.
		%We present three solutions to this problem: a) within $O(\delta \lg \frac{n}{\delta})$ words of space,  matching statistics can be computed in $O(m^2\lg^{\epsilon} \gamma +m\lg n)$ time, where $\epsilon$ is an arbitrarily small positive constant; b) by increasing the space to $O(\gamma \lg \gamma + \delta \lg \frac{n}{\delta})$, the computing time can be $O(m^2+m \lg \gamma \lg \lg \gamma+m\lg n)$; c) and if the alphabet size is contant, we can further improve the computing time to be $O(m^2+m\lg n)$ using $O(\gamma\lg \gamma \lg \lg \gamma +\delta \lg \frac{n}{\delta})$ words of space.
	\end{abstract}
	
	\Section{Introduction}
	The {\em matching statistics}, MS, of a pattern $P[1..m]$ with respect to a text $T[1..n]$ is an array of $m$ integers such that the $i$-th entry $\textnormal{MS}[i]$ stores the length of the longest prefix of $P[i..m]$ that occurs in $T$.
	For example, given that $T=$``$aaabbbcc$'' and $P=$``ccabb'', matching statistics MS of $P$ w.r.t.\ $T$ stores an array of 5 integers, $[2, 1, 3, 2, 1]$.
	Originally, the concept, matching statistics, was introduced by Chang and Lawler \cite{chang1994sublinear} to solve the approximate string matching problem, i.e., given text $T[1..n]$ and pattern $P[1..m]$, the problem asks for all the locations in the text where $P[1..m]$ appears, with at most $k$ differences (including substitutions, insertions, and deletions) being allowed, where $k$ is not necessarily a constant. 
	The approximate string matching plays an important role in computational genomics.
	In terms of sequence alignment, reads might not match the genome exactly because of the sequencing error, natural variance (i.e., differences in DNA among individuals of the same species), etc.; for that reason, the algorithms for the exact string matching might not be sufficient, and the approximate string matching is needed.
	Matching statistics is also useful in a variety of other applications such as finding the longest common substring between $P$ and $T$ \cite{gusfield1997algorithms}.
	A textbook solution \cite{gusfield1997algorithms} shows that a {\em suffix tree} data structure augmented with {\em suffix links} on the tree nodes can be used to compute matching statistics in $O(m\lg \sigma)$ time, where $\sigma$ represents the size of the alphabet that $T$ is drawn from;
	%A suffix tree requires $O(n)$ words of space.
	the data structure uses $O(n)$ words of space.
	Ohlebusch et al.~\cite{ohlebusch2010computing} solved this problem using a fully compressed text indexes built upon $T$, which consist of a {\em wavelet tree} data structure that supports the LF-Mapping and the backward search, a LCP-array, and a data structure that supports fast-navigating on a LCP-interval tree. 
	Their indexes occupy $n\lg \sigma+ 4n+o(n\lg \sigma)$ bits of space and achieve the same computing time as of the textbook solution.
	In the genomic databases, texts are always massive and highly repetitive; however, the compressed indexes based on statistical entropy might not capture repetitiveness \cite{navarro2007compressed}.
	Bannai et al.~\cite{bannai2020refining} considered to compute MS for a highly repetitive text.
	They augmented a {\em run-length BWT} with $O(r)$ words of space, where $r$ is the number of {\em runs} in the BWT for $T$, and their indexes support computing MS in $O(m\lg \lg n)$ time, assuming that each element in $T$ can be accessed in $O(\lg \lg n)$ time.
	Let $z$ denotes the number of phrases in {\em Lempel-Ziv parsing} (LZ).
	It has been proved that $r=O(z\lg^2 n)$ holds for every text $T$ \cite{kempa2020resolution}.
	However, to our knowledge, LZ-based indexes on computing MS have not been known prior to this work.
	%In this paper, we focus on computing MS for a highly repetitive texts.
	
	Recently, new compressibility measures such as $\gamma$, the size of the smallest {\em string attractor}, and $\delta$, a better measure of repetitiveness have been proposed. 
	%More details about $\gamma$ and $\delta$ will be presented in the coming sections.
	Both new measures better capture the compressibility of repetitive strings.
	It has been proved that $\delta\leq \gamma\leq z = O(\delta \lg \frac{n}{\delta})$ \cite{kempa2018roots, kociumaka2021towards}.
	In this paper, we design the first string attractor based indexes (which is also workable upon LZ-parsing) to support computing matching statistics; the space cost of the indexes is measured by $\gamma$ and $\delta$.
	The computation time using string attractor based indexes might not be as efficient as the one using BWT-based indexes, but the indexes in the prior category always have an advantage of space cost.
	%	Similar to LZ-based indexes, the computing time using string attractor based indexes might not be as efficient as the one using BWT-based indexes by Bannai et al.~\cite{bannai2020refining}, but the indexes in the prior category always have an advantage of space cost.
	%\paragraph{Our Results.}In this paper, we study the problem of computing the matching statistics upon highly repetitive texts.
	%We present three solutions to this problem: a) within $O(\delta \lg \frac{n}{\delta})$ words of space,  matching statistics can be computed in $O(m^2\lg^{\epsilon} \gamma +m\lg n)$ time, where $\epsilon$ is an arbitrarily small positive constant; b) by increasing the space to $O(\gamma \lg \gamma + \delta \lg \frac{n}{\delta})$, the computing time can be $O(m^2+m \lg \gamma \lg \lg \gamma+m\lg n)$; c) and if the alphabet size is contant, we can further improve the computing time to be $O(m^2+m\lg n)$ using $O(\gamma\lg \gamma \lg \lg \gamma +\delta \lg \frac{n}{\delta})$ words of space.

	\paragraph{Our Results.} The results can be summarized as Theorem \ref{theorem_for_all}.
	In the first solution, we apply a data structure similar to LZ-compressed indexes. Instead of using LZ parsing, we define the phrases upon the smallest string attractor. 
	%We store a Patricia tree \cite{morrison1968patricia} for the reversed phrases and another one for the suffixes following the phrase boundaries.
	We store a Patricia tree for the reversed phrases and another one for the suffixes following the phrase boundaries.
	To access text $T[1..n]$ within compressed space, we apply the string indexing data structure by Kociumaka et al. \cite{kociumaka2021towards}, whose space cost is measured by $\delta$.
	We give a simple and practical algorithm that reduces the problem of computing MS into $O(m^2)$ times of 2D orthogonal range predecessor queries upon $\gamma$ points on a grid.
	In the second solution, we apply the data structure framework by Abedin et al. \cite{abedin2018heaviest}. Originally, they used the framework to find the longest common substring (LCS) between $P$ and $T$. Naively, computing MS of $P[1..m]$ can be reduced to $m$ times of LCS queries. Given that each LCS query can be computed in $O(m\lg \gamma \lg \lg \gamma)$ time, the naive method would take $O(m \cdot m\cdot\lg \gamma \lg \lg \gamma)$ time. 
	We adjust their framework to computing MS and improve the query time to be $O(m \cdot (m+\lg \gamma \lg \lg \gamma))$.
	
	\begin{theorem}
		\label{theorem_for_all}
		The matching statistics of a $P[1..m]$ with respect to a text $T[1..n]$ can be computed (\romannumeral1) in $O(m^2 \lg^{\epsilon} \gamma + m\lg n)$ time using an $O(\delta \lg \frac{n}{\delta})$ word space data structure, or (\romannumeral2) in $O(m^2 + m \lg \gamma \lg \lg \gamma + m\lg n)$ time using $O(\gamma \lg \gamma + \delta \lg \frac{n}{\delta})$ word space data structure, in which $\epsilon$ is any small positive constant, $\gamma$ is the size of the smallest string attractor, and $\delta$ is $\max\{ S(k)/ k, 1\leq k\leq n \}$, where  $S(k)$ denotes the number of distinct $k$-length sub-strings of $T$.
	\end{theorem}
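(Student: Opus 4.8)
The plan is to prove the two parts separately, since they rest on different data-structure frameworks, but both follow the same high-level recipe: reduce the computation of each entry of MS to a small number of queries on a geometric/string index whose space is bounded by $\gamma$ and $\delta$.

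For part (i), I would first fix the phrase decomposition of $T$ induced by the smallest string attractor: cut $T$ at the $\gamma$ attractor positions to obtain $O(\gamma)$ phrases. The key structural fact (to be stated and proved as a lemma) is the usual attractor/LZ-style "primary occurrence" property: every substring of $T$ has an occurrence that crosses a phrase boundary, so it suffices to be able to find, for a given pattern substring $P[i..j]$, the longest extension around a phrase boundary. To do this I build a Patricia tree over the reversed phrases and a Patricia tree over the suffixes starting at phrase boundaries, and associate with each boundary a point $(\texttt{rank in reversed-phrase tree}, \texttt{rank in suffix tree})$ on a $[O(\gamma)]\times[O(\gamma)]$ grid. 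Given $P[i..m]$, I walk $j$ from $i$ upward; for each candidate length I do a blind search of $P[i..j-1]^{\mathrm{rev}}$ in the reversed-phrase tree and of $P[j..m]$ in the suffix tree to get two orthogonal ranges, and a 2D orthogonal range predecessor (emptiness/predecessor) query tells me whether a boundary splits an occurrence of $P[i..j]$ that way. Patricia-tree searches require one verification against the actual text, which is where Kociumaka et al.'s $O(\delta\lg\frac n\delta)$-word random-access structure \cite{kociumaka2021towards} enters; each LCE/verification costs $O(\lg n)$ (via its substring extraction), charged once per $i$, giving the $m\lg n$ term. The $2D$ range predecessor structure on $O(\gamma)$ points can be stored in $O(\gamma\lg^\epsilon\gamma)=O(\gamma)\subseteq O(\delta\lg\frac n\delta)$ words with $O(\lg^\epsilon\gamma)$ query time, and since for each of the $m$ start positions we try $O(m)$ right endpoints, the total is $O(m^2\lg^\epsilon\gamma + m\lg n)$, as claimed. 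The space is dominated by Kociumaka et al.'s structure, $O(\delta\lg\frac n\delta)$ words.

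For part (ii), I would instead instantiate the heaviest-induced-ancestor framework of Abedin et al.~\cite{abedin2018heaviest}: build a (contracted) suffix tree of all phrases' reverses and a suffix tree of all suffixes from phrase boundaries, keyed so that a single heaviest-induced-ancestor query on two nodes returns the deepest cross-boundary match. Their LCS algorithm issues one such query per position at cost $O(m+\lg\gamma\lg\lg\gamma)$ amortized over the descent of $P$; the naive MS algorithm would repeat the full descent $m$ times at $O(m\lg\gamma\lg\lg\gamma)$ each. The improvement I would carry out is to process $P$ from right to left (equivalently, reuse the matching-statistics suffix-link trick): maintaining a single locus in each Patricia tree as $i$ decreases lets the total work on tree navigation telescope to $O(m)$ plus $O(m)$ HIA queries, giving $O(m^2 + m\lg\gamma\lg\lg\gamma + m\lg n)$, where again the $m\lg n$ accounts for the per-start verification/LCE via the $O(\delta\lg\frac n\delta)$-word access structure. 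The HIA structure costs $O(\gamma\lg\gamma)$ words, the access structure $O(\delta\lg\frac n\delta)$, hence the stated bound.

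The main obstacle I anticipate is the bookkeeping in the reduction itself: proving that every MS entry is actually certified by a cross-boundary occurrence, and — more delicately — arguing that the per-$i$ verification cost can be kept to $O(\lg n)$ rather than $O(\lg n)$ per $(i,j)$ pair. This requires that when we extend $P[i..j]$ to $P[i..j+1]$ we do not re-verify from scratch but extend an already-verified match (so the verifications across all $j$ for a fixed $i$ telescope to a single LCE of length $\le m$, evaluated in $O(\lg n)$ via substring extraction). Getting the amortization right in part (ii) — that descending $P$ once and using suffix links keeps the Patricia-navigation cost at $O(m)$ total rather than $O(m)$ per position — is the second point that needs care; it is essentially the classical matching-statistics suffix-link argument transplanted onto the two Patricia trees, and I would spell it out as a separate lemma before combining the pieces.
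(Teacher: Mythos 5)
Your overall architecture matches the paper's (attractor-induced phrases, a Patricia tree of reversed phrases and one of boundary suffixes, points on a $[\gamma]^2$ grid with orthogonal range predecessor/successor structures, and Kociumaka et al.\ for text access), but both parts have a genuine gap in the step that actually delivers the claimed time bounds. In part (i), your loop organization is ``for each start position $i$ and each split $j$, blind-search $P[i..j-1]^{rev}$ and $P[j..m]$ from scratch.'' That is $O(m^2)$ fresh Patricia descents, each reading up to $m$ characters, i.e.\ $O(m^3)$ work before any range query is issued; the searches for different $j$ do not telescope, since $P[i..j]^{rev}$ for growing $j$ gains a character at the front, not the back. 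The paper's crucial reorganization is to fix the \emph{split} position $i$ instead: it performs exactly one search of $(P[1..i])^{rev}$ in $T_{rev}$ and one of $P[i+1..m]$ in $T_{suf}$ per split ($O(m^2+m\lg n)$ total, Lemma~\ref{lemma:loci_finding}), and then obtains every candidate left part $P[k..i]$ for free, because $(P[k..i])^{rev}$ is a prefix of $(P[1..i])^{rev}$ and hence its locus is an ancestor of $\loci_1$; walking the root-to-locus path and issuing one $\partner$ query per node gives the $O(m^2\lg^{\epsilon}\gamma)$ bound. Without this inversion your count does not reach $O(m^2\lg^{\epsilon}\gamma+m\lg n)$. (Also, $O(\gamma\lg^{\epsilon}\gamma)$ is not $O(\gamma)$; the paper uses the linear-space, $O(\lg^{\epsilon}\gamma)$-query variant of the range structures.)

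In part (ii), the ``suffix-link telescoping'' you propose is not available and does not address the real bottleneck. The two trees are sparse tries over only $\gamma$ strings, so they do not carry suffix links in the sense needed for the classical matching-statistics descent; moreover, a single HIA query per position does not compute $MS[i]$, because $MS[i]$ constrains the left endpoint of the match to be exactly $i$ while an HIA query freely maximizes the sum of depths on both sides. The paper's actual mechanism is different: it defines \emph{locally potential maximal exact matchings} (LPMEMs), proves that once all $\occ=O(m^2)$ LPMEMs of $P$ are known, MS follows in $O(\occ+m)$ time from the recurrence $MS[i]=\max(MS[i-1]-1,\,S[i])$ (Lemma~\ref{lemma: LPMEMS_to_MS}), and then enumerates all LPMEMs output-sensitively in $O(m\lg\gamma\lg\lg\gamma+\occ)$ time using heavy-path decomposition of $T_{suf}$, the nodes $\alpha_f,\beta_f$ on each heavy path, and precomputed special skyline node lists stored in induced subtrees of total size $O(\gamma\lg\gamma)$. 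You would need to supply some equivalent of this enumeration-plus-recurrence argument; as written, your part (ii) does not yield $O(m^2+m\lg\gamma\lg\lg\gamma+m\lg n)$.
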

	
	%	\begin{theorem}
		%		\label{theorem_for_all}
		%		There is a data structure constructed upon a text $T[1..n]$ that supports to compute MS of a pattern $P[1..m]$ with respect to $T$:
		%		\begin{enumerate}[leftmargin=.75cm,labelsep=0cm,align=left,label={\roman*)}]
			%			\item in $O(m^2 \lg^{\epsilon} \gamma + m\lg n)$ time using $O(\delta \lg \frac{n}{\delta})$ words of space, where $\epsilon$ is any small positive constant;
			%			\item in $O(m^2 + m \lg \gamma \lg \lg \gamma + m\lg n)$ time using $O(\gamma \lg \gamma + \delta \lg \frac{n}{\delta})$ words of space. 
			%			%\item iii) in $O(m^2 + m\lg n)$ time using $O(\gamma \lg \gamma \lg \lg \gamma + \delta \lg \frac{n}{\delta})$ words of space
			%		\end{enumerate}
		%		%, where $z$ is the number of phrases in the 77 parse of $T$.
		%		%We can build a data structure for $T[1..n]$ with $O(\delta \lg \frac{n}{\delta})$ words of space, such that later, given $P[1..m]$, we can compute MS for P with respect to $T$ in $O(m^2 \lg^{\epsilon} z + m\lg n)$ time, where $z$ is the number of phrases in the LZ77 parse of $T$.
		%	\end{theorem}
	If text $T[1..n]$ is drawn from constant-size alphabet, we can further improve the computation time to be $O(m^2 + m\lg n)$ time using an $O(\gamma \lg \gamma + \delta \lg \frac{n}{\delta})$ word data structure.
	The third solution can be achieved by combining the first and second solution:
	i) when $m=\Omega(\lg \gamma \lg \lg \gamma)$, we can directly apply the second solution to achieve the target bound for the query time; ii) otherwise, we update the first solution using the technique solving the {\em ball inheritance} problem \cite{chan2011orthogonal} to improve the query time without decreasing the space cost.
	Due to the space limitation, the proof of the third solution is deferred to the full version of this paper.
	%We believe that our solutions outperform the current state-of-the-art algorithms in terms of the upper bound of the space cost.
	%
	%\begin{theorem}
	%	\label{theorem_gamma: MS_2}
	%	Given a text $T[1..n]$ drawn from constant-size alphabet, we can build a data structure with $O(\gamma \lg \gamma \lg \lg \gamma + \delta \lg \frac{n}{\delta})$ words of space such that given a query pattern $P[1..m]$, the MS of P w.r.t $T$ can be computed in $O(m^2 + m\lg n)$ time.
	%\end{theorem}
	%
	%\begin{theorem}
	%	\label{theorem_delta: MS}
	%	There is a data structure constructed upon a text $T[1..n]$ using $O(\delta \lg \frac{n}{\delta})$ words of space that supports to compute MS of a pattern $P[1..m]$ with respect to $T$ in $O(m^2 \lg^{\epsilon} \gamma + m\lg n)$ time. %, where $z$ is the number of phrases in the 77 parse of $T$.
	%	%We can build a data structure for $T[1..n]$ with $O(\delta \lg \frac{n}{\delta})$ words of space, such that later, given $P[1..m]$, we can compute MS for P with respect to $T$ in $O(m^2 \lg^{\epsilon} z + m\lg n)$ time, where $z$ is the number of phrases in the LZ77 parse of $T$.
	%\end{theorem}
	%
	%\begin{theorem}
	%	\label{theorem_gamma: MS}
	%	We can build a data structure for a text $T[1..n]$ with $O(\gamma \lg \gamma + \delta \lg \frac{n}{\delta})$ words of space, such that later, given $P[1..m]$, we can compute MS for P w.r.t.\ $T$ in $O(m^2 + m \lg \gamma \lg \lg \gamma + m\lg n)$ time.
	%\end{theorem}
	%
	%\paragraph{Organizations.}

	%\section{Preliminaries}
	\Section{Preliminaries}
	This section introduces the notations and the previous results used throughout this paper. Let $\epsilon$ denote any small positive constant, and all problems are studied under the standard {\em word RAM} model.
	
	\paragraph{Compressibility Measures.} We give the precise definitions of the compressibility measures $\gamma$ and $\delta$ that are mentioned before.
	
	\begin{definition}\cite{kempa2018roots}
		A string attractor of a string $T[1..n]$ is a set of $\gamma'$ positions $\Gamma'= \{j_1, \cdots, j_{\gamma'}\}$ such that every substring $T[i..j]$ has an occurrence $T[i'..j']=T[i..j]$ with $j_k \in [i', j']$ for some $j_k\in \Gamma'$.
	\end{definition}
	
	Let $\Gamma^{*}$ denote $\{1, \Gamma, n\}$, where $\Gamma$ denotes the smallest size string attractor storing positions sorted increasingly.
	For each $2\leq i\leq |\Gamma^*|$, we call substring $T[\Gamma^{*}[i-1].. \Gamma^{*}[i]]$ a {\em parsing phrase}.
	Let $\gamma$ denote the size of $\Gamma$.
	It follows that given any substring $T[i..j]$, there must be an occurrence $T[i'..j']=T[i..j]$ such that $T[i'..j']$ crosses the phrase boundary.
	Kociumaka et al.~\cite{kociumaka2021towards} defined a new measure $\delta$, which is even smaller than $\gamma$.
	Furthermore, measure $\delta$, different from $\gamma$, can be computed in linear time.
	\begin{definition}\cite{kociumaka2021towards}
		Let $d_k(S)$ be the number of distinct length-$k$ sub-strings in $S$. Then
		$\delta = \max\{d_k(S)/k: k\in[1..n]\}.$
	\end{definition}
	
	%The precise definition of matching statistics is as shown below. 
	%\begin{definition}
	%	The matching statistics MS of a pattern $P[1..m]$ with respect to a text $T[1..n]$ are an array of integers $MS[1..m]$ such that $P[i..i+MS[i]-1]$ is the longest substring of $P$ starting at position $i$ that matches substring somewhere in $T$.
	%\end{definition}
	
	\paragraph{Induced-Check and Find Partner.} Let $T_{1}$ and $T_{2}$ be two trees on the same set of $n$ leaves.
	A node from $T_{1}$ and a node from $T_{2}$ are {\em induced} together if they have a common leaf descendant \cite{gagie2013heaviest}.
	The {\em partner} operation \cite{abedin2018heaviest} is defined upon the inducing relationship.%, which can be used to find the {\em longest common substring}.
	%It is also an elementary operation in our solutions. 
	%Its definition is as shown below:
	\begin{definition}\cite{abedin2018heaviest} 
		Given a pair of trees $T_{1}$ and $T_{2}$, the partner of a node $x\in T_{1}$ w.r.t a node $y \in T_{2}$, denoted by $\partner(x/y)$, is the lowest ancestor, $y'$, of $y$ such that $x$ and $y'$ are induced. Likewise, $\partner(y/x)$ is the lowest ancestor, $x'$, of $x$ such that $x'$ and $y$ are induced.
	\end{definition}
	
	%\begin{lemma}\cite{karkkainen1996lempel}
	%	\label{lemma: occ_report}
	%	Given a string $S[1.. n]$ whose LZ77 parse consists of $z$ phrases, we can store $O(z)$ words such that later, given the list of primary occurrences of a pattern $P$ that occurs $occ$ times in $S$, we can list all
	%	$P$’s occurrences in $O(occ\lg \lg n)$ time.
	%\end{lemma}
	
	It has been proved that the induced relationship can be solved using 2D orthogonal range emptiness queries, while finding the partner can be reduced to 2D orthogonal range predecessor/successor queries.
	
	%\begin{lemma}
	%	\label{lemma_range_emptiness}\cite{chan2011orthogonal} 
	%	Given $n$ points in 2 dimensional rank space, there is a data structure using $f(n)$ words of space to support two-dimensional orthogonal range emptiness, reporting, and predecessor/successor queries in $g(n)$, $O((\occ+1)\cdot g(n))$ and $O(g(n))$ time, respectively, such that, 
	%	\begin{itemize}
		%		\item a) If $f(n)=O(n)$, then $g(n)=O(\lg^{\epsilon} n)$ query time. %for any small $\epsilon>0$;
		%		\item b) If $f(n)=O(n\lg \lg n)$, then $g(n)=O(\lg \lg n)$;
		%	\end{itemize}
	%\end{lemma}
	
	\begin{lemma}\cite{abedin2018heaviest}
		(Induced-Check). \label{lem:induced-check}
		Given two nodes $x$, $y$, where $x\in T_{1}$ and $y \in T_{2}$, we can check if they are induced or not i) in $O(\lg \lg n)$ time using an $O(n\lg \lg n)$ word space structure, or ii) in $O(\lg^{\epsilon} n)$ time using an $O(n)$ word space structure.
	\end{lemma}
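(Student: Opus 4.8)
The plan is to recast the induced-check as a single two-dimensional orthogonal range emptiness query and then read off both trade-offs from known word-RAM data structures for that primitive.

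First I would impose a left-to-right ordering on the $n$ shared leaves using a depth-first traversal of $T_{1}$, and independently one using a depth-first traversal of $T_{2}$. For a leaf $\ell$, write $p_1(\ell)$ for its rank in the $T_{1}$-ordering and $p_2(\ell)$ for its rank in the $T_{2}$-ordering, and associate with $\ell$ the grid point $(p_1(\ell), p_2(\ell)) \in [1..n] \times [1..n]$. The structural fact I rely on is that the leaf descendants of any node $u \in T_{1}$ form a contiguous interval $[\ell_1(u), r_1(u)]$ of the $T_{1}$-ordering, and symmetrically in $T_{2}$; these endpoints are obtained by a standard Euler-tour preprocessing and stored in a table, using $O(n)$ words since each tree has $O(n)$ nodes. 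With this setup, $x \in T_{1}$ and $y \in T_{2}$ share a common leaf descendant --- that is, are induced --- if and only if some grid point lies in the rectangle $[\ell_1(x), r_1(x)] \times [\ell_1(y), r_1(y)]$, so inducedness is decided by one four-sided range emptiness query on $n$ points.

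To finish, I would plug in the RAM range-emptiness structures of Chan, Larsen, and P\u{a}tra\c{s}cu \cite{chan2011orthogonal}: one variant occupies $O(n \lg \lg n)$ words and answers an emptiness query in $O(\lg \lg n)$ time, and another occupies $O(n)$ words and answers in $O(\lg^{\epsilon} n)$ time for any constant $\epsilon > 0$. Combined with the $O(n)$-word interval tables --- whose constant-time node-to-interval lookups do not dominate the emptiness-query cost --- the two bounds in the statement follow immediately.

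The construction is routine; the only points that need care are verifying that the ``subtree leaves form an interval'' claim holds under a consistently chosen DFS numbering (this also covers the degenerate case where $x$ or $y$ is itself a leaf, yielding a one-row or one-column rectangle), and checking that the interval lookup is genuinely $O(1)$ so that it is absorbed into the query time. I do not foresee a real obstacle: the entire content is the observation that ``$x$ and $y$ share a leaf'' is the same statement as ``a rectangle contains a point,'' after which the lemma is inherited from off-the-shelf geometric data structures.
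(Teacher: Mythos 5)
Your proposal is correct and matches the approach the paper itself invokes: the lemma is cited from Abedin et al., and the text immediately preceding it states precisely your reduction of induced-check to a 2D orthogonal range emptiness query over the $n$ leaf points in rank space, answered by the structures of Chan, Larsen, and P\u{a}tra\c{s}cu. The only blemish is a typo --- the rectangle for $y \in T_{2}$ should be $[\ell_2(y), r_2(y)]$, not $[\ell_1(y), r_1(y)]$ --- which does not affect the argument.
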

	
	\begin{lemma}\cite{abedin2018heaviest}
		\label{lemma:partner_finding}
		(Find Partner). Given two nodes $x$, $y$ where $x\in T_{1}$ and $y\in T_{2}$, we can find $\partner(x/y)$ as well as $\partner(y/x)$ i) in $O(\lg \lg n)$ time using an $O(n\lg \lg n)$ word space structure, or ii) in $O(\lg^{\epsilon} n)$ time using an $O(n)$ word space structure.
	\end{lemma}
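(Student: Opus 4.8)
The plan is to reduce each \partner-operation to a constant number of two-dimensional orthogonal range predecessor/successor queries plus constant-time lowest-common-ancestor queries, and then to plug in off-the-shelf data structures for these primitives. First I would fix a geometric encoding of the two trees: assign to each of the $n$ leaves its rank in the left-to-right leaf order of $T_1$ and its rank in the left-to-right leaf order of $T_2$, and map the leaf to the point of $[1..n]^2$ given by this pair of ranks. This produces a set $\mathcal P$ of $n$ points, one per row and one per column. Writing $[\lMost(v)..\rMost(v)]$ for the contiguous interval of $T_i$-ranks spanned by the leaves below a node $v$ of $T_i$, a node $u\in T_1$ and a node $v\in T_2$ are induced exactly when the axis-aligned box $[\lMost(u)..\rMost(u)]\times[\lMost(v)..\rMost(v)]$ contains a point of $\mathcal P$ --- this is the range-emptiness reduction already recorded before Lemma \ref{lem:induced-check}. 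Alongside $\mathcal P$ I would store, for $i=1,2$, a rank-to-leaf lookup array and a linear-space constant-time LCA structure on $T_i$.

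Next I would handle $\partner(x/y)$. Consider the vertical slab of $\mathcal P$ selected by the $T_1$-interval $[\lMost(x)..\rMost(x)]$ of $x$; it is nonempty since $x$ has a leaf below it. The ancestors of $y$ in $T_2$ form a chain whose $T_2$-rank intervals are nested, with left endpoints non-increasing and right endpoints non-decreasing as one moves up; hence the property ``this ancestor is induced with $x$'', i.e.\ ``its interval contains the $T_2$-rank of some slab point'', is monotone along the chain, and $\partner(x/y)$ is the lowest ancestor for which it holds. The key observation I would use is that this lowest ancestor is pinned down by just two extremal slab points: the slab point of largest $T_2$-rank still below $\lMost(y)$, and the slab point of smallest $T_2$-rank still above $\rMost(y)$ (whichever of these exist; if some slab point already falls inside $[\lMost(y)..\rMost(y)]$ then the answer is $y$ itself). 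Each of these two points is obtained by one two-dimensional orthogonal range predecessor/successor query on the slab, and $\partner(x/y)$ is then the deeper of the $\lca$ in $T_2$ of $y$ with the leaf at the predecessor rank and the $\lca$ in $T_2$ of $y$ with the leaf at the successor rank: since an ancestor's interval can only grow by extending leftward or rightward, the first slab point it captures on each side is the closest one, so the deeper of these two LCAs is the lowest induced ancestor. $\partner(y/x)$ is computed symmetrically by interchanging the two coordinate axes: use the horizontal slab given by $[\lMost(y)..\rMost(y)]$, run predecessor/successor in the $T_1$-coordinate around $[\lMost(x)..\rMost(x)]$, and take LCAs in $T_1$.

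Finally I would collect the costs. Each \partner-operation costs $O(1)$ LCA queries and $O(1)$ two-dimensional orthogonal range predecessor/successor queries over the $n$ points of $\mathcal P$, plus $O(1)$ array lookups, so computing both $\partner(x/y)$ and $\partner(y/x)$ stays within the same asymptotic bound. Using known data structures for two-dimensional orthogonal range predecessor/successor with the same two trade-offs as in Lemma \ref{lem:induced-check} --- $O(\lg\lg n)$ query time with $O(n\lg\lg n)$ words, or $O(\lg^{\epsilon}n)$ query time with $O(n)$ words --- together with the linear-space $O(1)$-time LCA structures and the rank-to-leaf arrays, yields exactly the stated bounds. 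The part I expect to require the most care is the monotonicity/nestedness argument that collapses the search over the entire ancestor chain into these two extremal range queries and the resulting LCA; once that is in place, the rest is a direct assembly of existing components.
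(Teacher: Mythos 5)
Your proposal is correct and matches the standard reduction that the paper itself (citing Abedin et al.) later spells out in Section~5: one range-emptiness test on the box $[\lMost(x),\rMost(x)]\times[\lMost(y),\rMost(y)]$, one range-predecessor and one range-successor query to find the nearest slab points flanking $y$'s interval, and constant-time LCAs in $T_2$ (symmetrically in $T_1$ for $\partner(y/x)$) to pick the deeper of the two candidates. The nestedness/monotonicity argument you flag as the delicate step is exactly the observation used there, so this is the same proof, not an alternative route.
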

	
	\paragraph{String Indexing.} Recently, Kociumaka et al. \cite{kociumaka2021towards} showed that within $O(\delta \lg \frac{n}{\delta})$  words of space, one can represent and index a string of $n$ characters.
	\begin{lemma}\cite{kociumaka2021towards}
		\label{lemma: self-index}
		Given a string $T[1..n]$ with measure $\delta$, one can build an $O(\delta \lg \frac{n}{\delta})$ word space data structure in $O(n\lg n)$ expected preprocessing time to support retrieving any substring $T[i..i+\ell]$ in $O(\ell+\lg n)$ worst-case time. %, (2) can compute the Karp–Rabin fingerprint of any substring of $S$ in time $O(\lg n)$, and (3) can report the $\occ$ occurrences of any pattern $P[1..m]$ in $S$ in time $O(m\lg n + \occ \lg^{\epsilon} n)$.%, for any constant $\epsilon>0$ fixed at construction time.
	\end{lemma}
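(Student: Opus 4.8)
The strategy is to realize the representation as a block-tree--like hierarchical encoding of $T$ of total size $O(\delta\lg\frac{n}{\delta})$ words, and then to equip it with pointer-following navigation. Concretely, I would build $L=\Theta(\lg\frac{n}{\delta})$ levels: level $0$ cuts $T$ into $\Theta(\delta)$ consecutive blocks of length $b_0=\Theta(n/\delta)$, and passing from level $j$ to level $j+1$ halves block lengths, so $b_j=b_0/2^{j}$, until the bottom level has blocks of $O(1)$ length that are stored verbatim. At each level $j$ a block is \emph{kept} if it is overlapped by the leftmost occurrence in $T$ of some length-$2b_j$ substring straddling it, and \emph{copied} otherwise; by the choice of ``leftmost'', the content of a copied block equals a length-$b_j$ window lying inside a pair of consecutive kept blocks of the same level, so it can be stored in $O(1)$ words as a pointer to that pair plus an offset, while kept blocks are subdivided recursively.

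The heart of the proof is to show that each level keeps only $O(\delta)$ blocks, which --- since every block (pointer--offset pair, or $O(1)$ verbatim characters) occupies $O(1)$ words --- gives $\sum_{j=0}^{L}O(\delta)=O(\delta\lg\frac{n}{\delta})$ words overall. A block is kept only if its $O(b_j)$-neighbourhood contains the starting position of the leftmost occurrence of some length-$\Theta(b_j)$ substring, and the number of such starting positions is at most the number of distinct length-$\Theta(b_j)$ substrings of $T$, which is $O(\delta b_j)$ by the definition of $\delta$; on the other hand a level has only $n/b_j$ blocks. Neither bound alone is tight enough uniformly over the $\Theta(\lg\frac{n}{\delta})$ scales, so the $O(\delta)$ count must come from a sharper charging that additionally uses the nesting of the hierarchy (kept blocks at level $j+1$ occur only inside kept blocks at level $j$, so ``kept mass'' cannot blow up). I expect this accounting --- the delicate combinatorics of $\delta$ in Kociumaka et al.~\cite{kociumaka2021towards} --- to be the main obstacle; the rest is routine.

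Extraction of $T[i..i+\ell]$ then proceeds by descending the hierarchy from level $0$: at a kept block containing the target position we step into the appropriate child, and at a copied block we follow its pointer to the equivalent position inside a kept pair and continue one level deeper, so that position $i$ is located in $O(L)=O(\lg\frac{n}{\delta})\subseteq O(\lg n)$ steps; the remaining characters are emitted by a left-to-right traversal over the $O(\ell+1)$ bottom-level blocks covering $[i,i+\ell]$, each reached in $O(1)$ amortized time by reusing the partial descent of its predecessor, for $O(\ell+\lg n)$ time in total. Finally, to perform the marking and fill in the pointers during preprocessing, I would hash the blocks level by level using Karp--Rabin fingerprints (verifying collisions by direct comparison), which identifies the leftmost occurrence of every relevant length-$2b_j$ substring at all $O(\lg\frac{n}{\delta})$ levels in $O(n\lg n)$ expected time and $O(n)$ working space --- the bottleneck --- after which the $O(\delta\lg\frac{n}{\delta})$-word structure is assembled.
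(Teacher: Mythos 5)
This lemma is imported wholesale from Kociumaka, Navarro and Prezza~\cite{kociumaka2021towards}; the paper cites it without proof, so you are reconstructing an external result rather than matching an in-paper argument. Your framework is the right one --- a block-tree--style hierarchy of $\Theta(\lg\frac{n}{\delta})$ levels where kept blocks recurse and copied blocks store a pointer-plus-offset, Karp--Rabin preprocessing, and a single root-to-leaf descent followed by a left-to-right scan for extraction --- and your remark that a copied block's target always lies inside a run of consecutive kept blocks (since every block overlapped by the leftmost occurrence of a length-$2b_j$ string is itself kept, by your own criterion) is correct and worth stating explicitly.

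The gap is exactly the one you flag yourself: you never establish that each level keeps only $O(\delta)$ blocks, and that bound \emph{is} the lemma. The two counts you can take for free --- a level has at most $n/b_j \approx \delta 2^j$ blocks, and the leftmost occurrences of the at most $d_{2b_j}(T) \le 2\delta b_j \approx 2n/2^j$ distinct length-$2b_j$ strings touch $O(n/2^j)$ blocks --- balance at $\Theta(\sqrt{n\delta})$ per level, far above $O(\delta)$. The nesting observation you offer in place of an argument (kept blocks at level $j+1$ sit inside kept blocks at level $j$) is true but gives no control: it permits the kept count to double at every level, which is entirely consistent with it reaching $\Theta(\sqrt{n\delta})$. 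The actual $O(\delta)$-per-level charging in~\cite{kociumaka2021towards} is a genuinely delicate argument specific to the measure $\delta$, not a refinement of either naive count, and ``I expect this accounting to be the main obstacle'' is an acknowledgement that the proof is missing rather than a proof. A secondary omission: the $O(1)$ amortized per block claim for the post-descent left-to-right scan also needs an argument, since a copied block encountered mid-scan triggers a pointer jump that can force a fresh partial descent, and the usual block-tree accounting that controls this is not supplied.
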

	
	%\begin{lemma}\cite{kociumaka2021towards}
	%	\label{lemma: self-index}
	%	Given a string $S[1..n]$ with measure $\delta$, there exists a data structure of size $O(\delta \lg \frac{n}{\delta})$ that can be built in $O(n\lg n)$ expected time and (1) can retrieve any substring $S[i..i+\ell]$ in time $O(\ell+\lg n)$, (2) can compute the Karp–Rabin fingerprint of any substring of $S$ in time $O(\lg n)$, and (3) can report the $\occ$ occurrences of any pattern $P[1..m]$ in $S$ in time $O(m\lg n + \occ \lg^{\epsilon} n)$.%, for any constant $\epsilon>0$ fixed at construction time.
	%\end{lemma}
	
	\Section{Computing MS within $O(\delta \lg \frac{n}{\delta})$ Words of Space}
	This section presents our most space-efficient solution for computing MS.
	%\subsection{Data Structures}
	%\label{sect: basic_ds}
	The data structure is similar to LZ-compressed indexes used for computing Longest Common Substrings \cite{gagie2013heaviest}, but instead of using phrases in LZ parse, we use the parsing phrases defined upon the smallest size string attractor of $T[1..n]$: we store one Patricia tree $T_{rev}$ for the reversed parsing phrases; we store another $T_{suf}$ for the suffixes of $T$ starting at different phrase boundaries.
	Those $\gamma$ phrases, $T[i+1.. j]$, and their corresponding suffixes, $T[j+1.. n]$, are sorted in the lexicographic order, respectively.
	We construct a $\gamma \times \gamma$ grid: we add a point, $(x, y)$, on the grid iff the lexicographically $x$-th phrase $T[i+1.. j]$ is followed by the lexicographically $y$-th suffix $T[j+1..n]$ in text $T$; we assign the $x$-coordinate of the point to reversed phrase\footnote{Given a string $s=$``abcd'', the formula $s^{rev}$ represents the string ``dcba''.} $T[i+1..j]^{rev}$ and the $y$-coordinate of the point to the suffix $T[j+1..n]$.
	It follows that all points on the grid are in rank space, i.e.,  they have coordinates on the integer grid $[\gamma]^2=\{1, 2,\cdots, \gamma\}^2$.
	We store a linear space data structure implemented by part ii) of Lemma \ref{lem:induced-check} to support 2D orthogonal range emptiness queries as induced-check and a linear space data structure by part ii) of Lemma \ref{lemma:partner_finding} to support 2D orthogonal range predecessor/successor queries as $\partner$-finding upon the $\gamma$ points on the grid, respectively.
	Finally, we need a data structure supporting substring queries in $T$, implemented by Lemma \ref{lemma: self-index}. 
	Storing the string attractor, both Patricia trees, and the data structures for orthogonal range searching uses $O(\gamma)$ words of space, while the string indexing data structure by Lemma \ref{lemma: self-index} requires $O(\delta \lg \frac{n}{\delta})$ words.
	Overall, the space cost is $O(\delta \lg \frac{n}{\delta})$ words, since $\gamma=O(\delta \lg \frac{n}{\delta})$ \cite{kociumaka2021towards}.
	%Given a text $T[1..n]$, we build the following data structures:
	%\begin{itemize}
	%	\item The smallest size string attractor of $T$ in $O(\gamma)$ words and $O(\gamma)$ parsing phrases defined upon the positions in the string attractor;
	%	\item The data structure with $O(\delta \lg \frac{n}{\delta})$ words of space shown in Lemma \ref{lemma: self-index} for substring queries in $T$;
	%	\item One Patricia tree $T_{1}$ for the reversed parsing phrases, and another $T_{2}$ for the suffixes starting at phrase boundaries; Both require $O(\gamma)$ words of space;
	%	\item A linear space data structure for $4$-sided emptiness (as induced-check), and range predecessor/successor (as $partner$ searching) queries for the grid on which there is a marker at point ($x$, $y$) if the $x$-th phrase in right-to-left lexicographic order is followed in the parse by the lexicographically $y$-th suffix starting at a phrase boundary; All uses $O(\gamma)$ words of space;
	%	%\item The data structure of $O(z)$ words shown in Lemma \ref{lemma: occ_report} for finding all MEMs;
	%\end{itemize}

	%\subsection{The Algorithm for Computing MS}
	%With the data structures introduced above, 
	%we present the algorithm to compute MS for a query pattern $P[1..m]$.
	As the pattern has $m$ entries, $P$ can be partitioned into $m-1$ different prefix and suffix pairs\textemdash that is, $P[1..i]$ and $P[i+1..m]$, for each $1\leq i \leq m-1$.
	For each prefix and suffix pair, we can use Lemma \ref{lemma:loci_finding} to find the loci of $(P[1..i])^{rev}$ in $T_{rev}$ and the loci of $P[i+1..m]$ in $T_{suf}$, respectively.
	%\begin{figure}
	%	\centering
	%	{\includegraphics[scale=1]{naive_pic.eps}}
	%	\caption{\label{fig-box} An example of computing MS using Theorem \ref{theorem_delta: MS}. Let $P[1..14]=\{a, b, c, d, e, f, g, h, i, j, k, l, m, n\}$ denote the query pattern. The loci $\loci_1$ and $\loci_2$ of query string $P[1..7]$ and $P[8..14]$ is as shown in $T_{rev}$ and $T_{suf}$, respectively. The query algorithm iterates from $\loci_1$ to the root node of $T_{rev}$. Because $\loci_1$ in $T_{rev}$ and $u'$ in $T_{suf}$ are induced together, we set $MS[3]$ to $8$. Next, as $u$ in $T_{rev}$ and $u'$ in $T_{suf}$ has the same leaf descendant, $MS[5]$ is set to 6. Finally, the last node we come across with before reaching the root node is $w$, which is induced with $\loci_2$. Therefore, $MS[6]$ is set to 8.} 
	%\end{figure}
	
	\begin{lemma}
		\label{lemma:loci_finding}
		Given a pattern $P[1..m]$, for all $1\leq i\leq m-1$, we can find the longest common prefix (LCP) of $(P[1..i])^{rev}$ and the path label of the node where the search in $T_{rev}$ terminates, and the LCP of $P[i + 1..m]$ and the path label of the node where the search in $T_{suf}$ terminates in $O(m^2+m\lg n)$ time.
	\end{lemma}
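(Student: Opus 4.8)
The plan is to reduce both searches to the same routine applied to two length-$m$ strings. First I would observe that $(P[1..i])^{rev}$, for $i=1,\dots,m-1$, is exactly the length-$i$ suffix of $P^{rev}$, and that $P[i+1..m]$ is the length-$(m-i)$ suffix of $P$; hence the lemma splits into two independent subtasks, each of the form: for every nontrivial suffix of a fixed length-$m$ string $Q$, run a blind search in a fixed Patricia tree and report the LCP between that suffix and the path label of the node at which the search ends. It therefore suffices to bound the cost of one search of a length-$\ell$ string and then sum over $\ell=1,\dots,m-1$.

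For a single query string $w$ of length $\ell$, I would run the standard Patricia-tree blind search: starting at the root, at each node pick the child edge whose first character equals the next unread character of $w$ (using a static dictionary storing the children of the node keyed by first character, built at preprocessing time, for $O(1)$ time per step), add the edge's skip count to the running matched length, and continue; the search stops when $w$ is exhausted, when no matching child exists, or when a leaf is reached. Since every traversed edge has skip at least one and the running length never exceeds $\ell$, at most $O(\ell)$ edges are traversed, so the blind search costs $O(\ell)$. Let $v$ be the node at which the search terminates (the lower endpoint of the edge, if it stops inside one). To recover the true LCP of $w$ with the path label of $v$ — the exact quantity the lemma asks for — I would pick any leaf below $v$; in $T_{rev}$ that leaf corresponds to a reversed parsing phrase and in $T_{suf}$ to a suffix starting at a phrase boundary, so it has a known position $p$ in $T$ and the path label of $v$ is a prefix of $T[p..n]$ (read in the appropriate direction). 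Using the string index of Lemma~\ref{lemma: self-index} I would retrieve the first $\min(\ell,\ \len(\text{path label of }v))\le\ell$ characters of that substring of $T$ in $O(\ell+\lg n)$ time and compare them against $w$ character by character, obtaining the LCP. Thus one query costs $O(\ell+\lg n)$.

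Summing over the $m-1$ suffixes of $P$ searched in $T_{suf}$, the query lengths are $m-1,m-2,\dots,1$, so the blind-search and comparison costs contribute $\sum_i (m-i)=O(m^2)$, while the $m-1$ copies of the additive $\lg n$ term from Lemma~\ref{lemma: self-index} contribute $O(m\lg n)$; the same bound holds symmetrically for the $m-1$ suffixes of $P^{rev}$ searched in $T_{rev}$. Adding the two gives the claimed $O(m^2+m\lg n)$ bound.

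The only delicate point, and the sole place the $\lg n$ term enters, is the verification step: a Patricia tree stores only branching characters and skip counts, not the edge strings, so the blind descent by itself cannot tell how far $w$ really agrees with the path label, and this agreement must be checked against a genuine occurrence of the path label inside $T$. Since $T$ is kept only in the $O(\delta\lg\frac n\delta)$-word compressed form of Lemma~\ref{lemma: self-index}, each such substring extraction pays an unavoidable $O(\lg n)$ overhead on top of the $O(\ell)$ characters actually read, which is exactly what the $m\lg n$ summand accounts for. No attempt is made to share work across the $m$ suffixes — the $O(m^2)$ cost already matches that of $m$ independent blind searches — which keeps the algorithm simple, as promised.
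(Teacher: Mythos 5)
Your proposal is correct and follows essentially the same route as the paper: a blind Patricia-tree descent for each of the $m-1$ prefix/suffix pairs, followed by verification of the matched prefix against an actual occurrence in $T$ via the compressed string index of Lemma~\ref{lemma: self-index}, with the per-pair cost $O(m+\lg n)$ summing to $O(m^2+m\lg n)$. The paper's own proof is just a terser statement of exactly this argument, so your additional detail (the skip-count accounting and the direction-of-reading remark for $T_{rev}$) is a faithful elaboration rather than a different method.
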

	\begin{proof}
		For each $1\leq i\leq m-1$, a query need to access $T$ to check that the path labels of the nodes where the searches terminate are really prefixed by some prefixes of $(P[1..i])^{rev}$ and $P[i + 1..m]$, which can be solved by the substring queries using Lemma \ref{lemma: self-index}.
		As there are $m-1$ different pairs of $(P[1..i])^{rev}$ and $P[i + 1..m]$, and the total number of characters that each pair of them contain is $m$, the searching time is $O(m^2+m\lg n)$.
	\end{proof}
	
	%\subparagraph*{Query Procedure.}
	Next, we present the query algorithm.
	For $1\leq i\leq m-1$, we search for $(P[1..i])^{rev}$ in $T_{rev}$ and for $P[i+1..m]$ in $T_{suf}$; access $T$ to find the longest common prefix (LCP) of $(P[1..i])^{rev}$  and the path label of the node where the search in $T_{rev}$ terminates, and the LCP of $P[i+1..m]$ and the path label of the node where the search in $T_{suf}$ terminates; take $\loci_1$ and $\loci_2$ to be the loci of those LCPs.
	%Given a pair of locus $\loci_1$ and $\loci_2$, we iterate through the path from $\loci_1$ to the root of $T_{rev}$: 
	For each node, $v$, on the path from $\loci_1$ to the root node of $T_{rev}$, we retrieve the lowest ancestor, $u$, of $\loci_2$ in $T_{suf}$ such that $u$ is induced together with $v$, i.e., $u=\partner(v/\loci_2)$.
	%the node in $T_{suf}$, denoted by $u$,  such that $u=\partner(v/\loci_2)$, i.e., $u$ is the lowest ancestor of $\loci_2$ that is induced together with $v$.
	Given any tree node $w$, we use $\str(w)$ to denote the path label of $w$ \footnote{If $v$ (resp. $u$) is the loci, then the longest matched prefix of $(P[1..i])^{rev}$ (resp. $P[i+1..m]$) might be a proper substring of the path label of $v$ (resp. $u$). In that case, we let $\str(v)$ (resp. $\str(u)$) denote the longest matched prefix of $(P[1..i])^{rev}$ (resp. $P[i+1..m]$). And $|\str(root)|$ is always 0.}.
	Observe that \footnote{For example, ``abc''.``efg''=``abcefg''.} $(\str(v))^{rev}.\str(u)$ (or $P[i-\len(\str(v))+1..i+\len(\str(u))]$) might be the longest prefix of $P[i-\len(\str(v))+1..m]$ that occurs in $T$; thus, we set $MS[i-\len(\str(v))+1]$ to be $\len(\str(v).\str(u))$ temporarily.
	For each $k\in[1..m-1]$ and $k\leq j \leq m-1$, the longest prefix of $P[k..m]$ might appear somewhere in $T$ crossing the phrase boundary whose immediately left phrase ends with $P[j]$ and immediately right phrase starts with $P[j+1]$; since there are at most $m-k+1$ different types of phrase boundaries, entry $MS[k]$ will finally store the length of the longest prefix of $P[k..m]$ that appears in $T$ after at most $m-k+1$ times of updates.
	%$m-k+1$ different types of phrase boundaries, i.e. the boundaries whose immediately left phrase ends with $P[j]$ and immediately right phrase starts with $P[j+1]$ for each $k\leq j \leq m-1$.
	%And $MS[i]$ finally store the maximum length of those different types of longest substrings.
	%To achieve that, whenever a longer prefix of $P[i..m]$ that crosses some phrase boundary and matches somewhere in $T$ was found, $MS[i]$ should be updated accordingly.
	\renewcommand{\thefootnote}{\fnsymbol{footnote}}
	The algorithm is shown in ComputingMS\ref{alg:ComputingMS}.\footnotetext{W.l.o.g., we assume that $m>1$. $MS[m]$ is set to 1, if the loci of $P[m]$ on $T_{suf}$ is a non-root node; Otherwise, $MS[m]$ is set to 0.} %\footnote{$MS[m]$ is one if}.
	%Note that there might be such cases that $P$ contains some characters that do not appear in the alphabet that $T$ is drawn from.
	%The entries in MS corresponding to those unseen characters should be set to 0.
	%To avoid missing those 0-entries in MS, initially, we set all entries in MS to 0.
	\begin{figure}[!t]
		\begin{minipage}[t]{0.48\textwidth}\footnotesize
			\begin{algorithm}[H]
				\caption{$^{\ddagger}$ComputingMS1($P[1..m]$, $T_{suf}$, $T_{rev}$)}
				\label{alg:ComputingMS}
				\begin{algorithmic}[1]
					\STATE $MS[1..m] \gets \{0 \cdots 0\}$
					\FOR{$i=1, 2, \ldots, m-1$}
					\STATE Find $\loci_1$ of $(P[1..i])^{rev}$ in $T_{rev}$
					\STATE Find $\loci_2$ of $(P[i+1..m])$ in $T_{suf}$
					\STATE $v \gets \loci_1$
					\WHILE{$v$ is not $\Null$} 
					\STATE $u\gets \partner(v/\loci_2)$
					\STATE $vp\gets \parent(v)$
					\STATE $j \gets \len(\str(v))$
					\WHILE{$j> \len(\str(vp))$}
					\STATE $\ell \gets j+\len(\str(u))$
					\IF{$\ell> MS[i-j+1]$}
					\STATE $MS[i-j+1] \gets \ell$
					\ENDIF
					\STATE $j\gets j-1$
					\ENDWHILE
					\STATE $v \gets vp$
					\ENDWHILE
					\ENDFOR
				\end{algorithmic}
			\end{algorithm}
		\end{minipage}
		\hfill
		\begin{minipage}[t]{0.48\textwidth}
			\begin{algorithm}[H]
				\caption{$^{\ddagger}$ComputingMS2($S[1..m]$)}
				\label{algo_2}
				\begin{algorithmic}[1]
					\STATE $MS[1..m] \gets \{0 \cdots 0\}$
					\STATE $max \gets 0$
					\FOR{$i=1, 2, \ldots, m-1$}
					\IF{$max\leq S[i]$}
					\STATE $max \gets S[i]$
					\ENDIF
					\STATE $MS[i] \gets max$
					\IF{$max>0$}
					\STATE $max \gets max-1$
					\ENDIF
					\ENDFOR
				\end{algorithmic}
			\end{algorithm}
		\end{minipage}
	\end{figure}
	%\begin{minipage}[t]{0.48\textwidth}\footnotesize
	%	 \begin{algorithm}[H]
		%	 	\label{alg:ComputingMS}
		%	 	\caption[Caption for LOF]{ComputingMS($P[1..m]$, $T_{suf}$, $T_{rev}$) $^{\dagger}$}
		%		{$MS[1..m]=\{0 \cdots 0\}$}\;
		%		 \For{$i=1, 2, \ldots, m-1$}{
			%				{Find $\loci_1$ of $(P[1..i])^{Rev}$ in $T_{rev}$}\;
			%				{Find $\loci_2$ of $(P[i+1..m])$ in $T_{suf}$}\;
			%				%{$v \gets \parent(\loci_1)$}\;
			%				{$v \gets \loci_1$}\;
			%				\While{$v$ is not $\Null$}{
				%					{$u\gets \partner(v/\loci_2)$}\;
				%					{$vp\gets \parent(v)$}\;
				%					\For {$j = \len(\str(v))$, $j> \len(\str(vp))$, $j{-}{-}$}{
					%							{$\ell \gets j+\len(\str(u))$}\;
					%							\If{$\ell> MS[i-j+1]$}{
						%								$MS[i-j+1] \gets \ell$\;
						%								%current section becomes this one\;
						%							}
					%					}
				%					
				%				}
			%			{$v \gets vp$}\;
			%		}
		%		%{$MS[m] \gets MS[m-1]$}\;
		%	\end{algorithm}
	%\end{minipage}%
	%%\hfill
	%\begin{minipage}[t]{0.48\textwidth}\footnotesize
	%	 \begin{algorithm}[H]
		%	 	\label{algo_2}
		%	 	\caption[Caption for LOF]{ComputingMS($S[1..m]$)$^{\dagger}$}
		%	 	{$MS[1..m]=\{0 \cdots 0\}$}\;
		%	 	{$max = 0$}\;
		%	 	\For{$i=1, 2, \ldots, m-1$}{
			%					 		\If{$max\leq S[i]$}{
				%					 			{
					%					 				$max \gets S[i]$}\;
				%					 			}
			%						 		{$MS[i] \gets max$}\;
			%						 		\If{$max>0$}{
				%						 			{
					%						 				$max \gets max-1$}\;
				%						 			}
			%	 	}
		%	 \end{algorithm}
	%	\end{minipage}
\renewcommand*{\thefootnote}{\arabic{footnote}}
%	\begin{algorithm}
	%		\label{alg:ComputingMS}
	%		\caption{ComputingMS($P[1..m]$, $T_{suff}$, $T_{rev}$)}\label{algo}
	%		\begin{algorithmic}[1]
		%			\State {$MS[1..m]=\{0 \cdots 0\}$}
		%			\For {$i=1, 2, \ldots, m-1$}
		%			\State Find $\loci_1$ of $(P[1..i])^{Rev}$ in $T_{1}$
		%			\State Find $\loci_2$ of $(P[i+1..m])$ in $T_{2}$
		%			\State $v \gets parent(loci_1)$
		%			\While {v is not the root of $T_{1}$}
		%			\State $u\gets partner(v/loci_2)$
		%			\State $vp\gets parent(v)$
		%			%\text{for $k = 1$, $k{+}{+}$, while $k < i$}
		%			\For {$j = len(str(v))$, $j\geq len(str(vp))$, $j{-}{-}$}
		%			%\For {$j=len(str(v));$ $j\geq len(str(vp));$ $j$}
		%			\State $\ell \gets j+len(str(u))$
		%			\If{$\ell> MS[i-j+1]$}
		%			\State  $MS[i-j+1] \gets \ell$
		%			%		\State add $\mathit{ptr}$ to $\mathit{TrustedData}$
		%			%		\Else
		%			%		\State{raise alarm}
		%			\EndIf
		%			\EndFor
		%			\State{$v \gets vp$}
		%			\EndWhile
		%			\EndFor
		%			\State \textbf{end}
		%		\end{algorithmic}
	%	\end{algorithm}

We analyze the query time of the algorithm:
As shown in Lemma \ref{lemma:loci_finding}, all locus of LCPs can be found in $O(m^2+m\lg n)$ time; for each $1\leq i\leq m-1$, the while loop at line 6 is operated $O(i)$ times, and all $O(i)$ iterations will call totally $O(i)$ times of $\partner$-finding queries and fill at most $i$ entries of MS; if only $O(\gamma)$ words of space is allowed, each $\partner$-finding query requires $O(\lg^{\epsilon} \gamma)$ time as shown in part ii) of Lemma \ref{lemma:partner_finding}.
The overall query time is $O(m^2+m\lg n+\sum_{i=1}^{m-1} O(i \cdot \lg^{\epsilon} \gamma+i))=O(m^2\lg^{\epsilon} \gamma+ m\lg n)$. The first solution completes.

\Section{Towards Improving the Computing Time}
\label{sect: future}
In this section, we are trying to get rid of factor $\lg^{\epsilon} \gamma$ from term $m^2\lg^{\epsilon} \gamma$ shown in the computing time before.
As a result, the space cost of the new data structure gets worse slightly.
Before showing our new solutions, we introduce a new definition {\em locally potential maximal exact matching (LPMEM)}:
%Furthermore, we also give another relevant definition, , which will be later used to construct matching statistics.

%\begin{definition}
%	\label{def:MEM}
%	$P[i..j]$ is a $MEM$ if and only if $P[i..j]$ occurs in $T$ and neither $P[i - 1..j]$ nor $P[i..j + 1]$ occurs in $T$.
%\end{definition}

\begin{definition}
	\label{def:LPMEM}
	Given a phrase boundary $k'\in \Gamma$, we refer to a substring $P[i..j]$ as a locally potential maximal exact matching (LPMEM) that crosses the phrase boundary at position $k'$ if substring $P[i..j]$ with an occurrence $T[i'..j']$ such that $i'\leq k'<j'$ holds and the occurrence can neither be extended to the left nor to the right.
	%		Given a phrase boundary $k'\in \Gamma$ matching with the pattern $P[1..m]$ somewhere between $P[k]$ and $P[k+1]$, if $P[i..k]$ and $T[k'-k+i..k']$ match with each other but $P[i-1..k]$ and $T[k'-k+i-1..k']$ do not and $P[k+1..j]$ and $T[k'+1.. k'+j-k]$ match with each other but $P[k+1..j+1]$ and $T[k'+1.. k'+j-k+1]$ do not, then we call $P[i..j]$ a locally potential maximal exact matching (LPMEM) that crosses the phrase boundary at the position $k'$.
\end{definition}

%Note that a LPMEM might not be a MEM after all.
%Contrarily, a MEM is always a LPMEM crossing some phrase boundary.
Let $v$ (resp. $u$) denote an ancestor node of $\loci_1$ (resp. $\loci_2$)  in $T_{rev}$ (resp. $T_{suf}$); let $\path(v, \loci_1, T_{rev})$ denote the path between $v$ and $\loci_1$ on $T_{rev}$; let $\path(u, \loci_2, T_{suf})$ denote the path between $u$ and $\loci_2$ on $T_{suf}$. It follows that if i) $v$ and $u$ are induced together; ii) the child of $v$ on $\path(v, \loci_1, T_{rev})$ does not induce with node $u$ in $T_{suf}$; iii) and the child of $u$ on $\path(u, \loci_2, T_{suf})$ does not induce with node $v$ in $T_{rev}$, then $v$ and $u$ together induce a LPMEM, which is $(\str(v))^{rev}.\str(u)$.

%The second solution consists of two steps: First, we modify the data structure by \cite{abedin2018heaviest} to find all LPMEMs of a pattern $P[1..m]$.\footnote{Originally, the data structure \cite{abedin2018heaviest} is used for finding the longest common substring between $T$ and $P$.}
%Then we will show that MS can be computed using LPMEMs. 
%Now, we will show how to adjust it to support finding all LPMEMs.
%In most cases we follow the same notations shown in \cite{abedin2018heaviest}.

%\subsection{Notation}
%
%\begin{definition}
%	Given a phrase boundary $k'\in \Gamma^{'}$ matching somewhere between $P[k]$ and $P[k+1]$, if $P[i..k]$ and $T[k'-k+i..k']$ match with each other but $P[i-1..k]$ and $T[k'-k+i-1..k']$ do not and $P[k+1..j]$ and $T[k'+1, k'+j-k]$ match with each other but $P[k+1..j+1]$ and $T[k'+1, k'+j-k+1]$ do not, then we call $P[i..j]$ a locally potential MEM (LPMEM) of $P$ that crosses the phrase boundary of $k'$.
%	Note that a LPMEM of $P$ might not be a MEM of $P$ after all.
%\end{definition}

%\subsection{Basic Properties of MEMs and LPMEMs}
\paragraph{Basic Properties of LPMEMs.} We discuss the properties of LPMEMs.
Those properties will be useful for designing the data structures and the query algorithm for the second solution.
For simplicity, we call all sub-strings of $P[1..m]$ that appear as LPMEM's in text $T[1..n]$ the LPMEMs of $P$.

\begin{lemma}
	\label{lemma: LPMEMS_to_MS}
	Given a pattern $P[1..m]$ and all $\occ$ LPMEMs of $P$, the matching statistics of $P$ can be computed in $O(\occ+m)$ time.
\end{lemma}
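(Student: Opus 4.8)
The plan is to reduce the task, in $O(\occ+m)$ time, to one run of the procedure ComputingMS2 on an array $S[1..m]$ derived from the LPMEMs. I would set $S[i]$ to the length of the longest LPMEM of $P$ starting at position $i$, and $S[i]=0$ if no LPMEM starts there. Given the list of all $\occ$ LPMEMs, $S$ is built in $O(\occ+m)$ time: initialise $S$ to zeros and, for each listed LPMEM $P[i..j]$, update $S[i]\gets\max\{S[i],\,j-i+1\}$. Feeding $S$ to ComputingMS2 then takes $O(m)$ time. The entries $MS[i]$, $1\le i\le m-1$, are claimed to equal the matching statistics $\mathrm{MS}$; the last entry $\mathrm{MS}[m]\in\{0,1\}$, and any ``fresh'' length-$1$ value not inherited from a larger $\mathrm{MS}[i-1]$, only records whether the single character $P[i]$ occurs in $T$, and is set by a direct check (in the full algorithm this coincides with the Patricia-tree search on $T_{suf}$ and costs nothing extra). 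So the whole computation is $O(\occ+m)$.

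For correctness I would prove two inequalities, writing $MS$ for the output of ComputingMS2. \emph{Soundness} ($MS[i]\le\mathrm{MS}[i]$): every LPMEM $P[i..j]$ occurs in $T$, so $S[i]\le\mathrm{MS}[i]$; and the standard monotonicity $\mathrm{MS}[i]\ge\mathrm{MS}[i-1]-1$ holds because deleting the first character of a prefix of $P[i-1..m]$ occurring in $T$ leaves a prefix of $P[i..m]$ occurring in $T$. A short induction on $i$ then shows the variable $max$ on entry to iteration $i$ is at most $\mathrm{MS}[i]$ (using the inductive bound $MS[i-1]\le\mathrm{MS}[i-1]$ together with monotonicity), so $MS[i]=\max\{max,\,S[i]\}\le\mathrm{MS}[i]$.

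\emph{Completeness} ($MS[i]\ge\mathrm{MS}[i]$) is the heart of the argument. Fix $i$ with $\ell:=\mathrm{MS}[i]\ge2$ (values $\ell\le1$ are either inherited from a larger $\mathrm{MS}[i-1]$, handled by the case below, or are the one-character fallback above). Then $P[i..i+\ell-1]$ occurs in $T$, and by the string-attractor property it has an occurrence crossing some parsing-phrase boundary $k'\in\Gamma$. Extend this occurrence leftward and rightward as far as the characters of $P$ allow, stopping only at a mismatch or an end of $P$; by maximality the resulting match $P[a..b]$ with $a\le i$ and $b\ge i+\ell-1$ still crosses $k'$ and cannot be extended, i.e.\ it is an LPMEM of $P$ starting at $a$ with $b-a+1\ge i+\ell-a$, so $S[a]\ge i+\ell-a$. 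Tracking ComputingMS2 from iteration $a$ onward: at iteration $a$ we get $max\ge S[a]\ge i+\ell-a\ge\ell\ge1$, hence $MS[a]\ge i+\ell-a$ and then $max$ is decremented to $\ge i+\ell-a-1$; by induction $MS[a+t]\ge i+\ell-a-t$ for $0\le t\le i-a$ (every such value is $\ge\ell\ge1$, so the decrement branch always fires), and at $t=i-a$ this gives $MS[i]\ge\ell$. With soundness this yields $MS[i]=\mathrm{MS}[i]$ for $1\le i\le m-1$, and $MS[m]$ is correct by construction.

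I expect the obstacle to be the claim that the maximally-extended occurrence is an LPMEM crossing $k'$ in the exact sense of Definition~\ref{def:LPMEM}: when $\ell$ is small the attractor-provided occurrence can sit right on the boundary, so one must check the ``crosses'' condition carefully — this is precisely why the degenerate one-character matches are peeled off and handled directly rather than through the LPMEM list. Once the right LPMEM witness for each $\mathrm{MS}[i]$ is pinned down, the loop-propagation induction and the monotonicity bookkeeping are routine.
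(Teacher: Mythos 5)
Your reduction is exactly the paper's: build $S[i]$ as the maximum LPMEM length starting at $i$ in $O(\occ+m)$ time and run ComputingMS2, which implements the recurrence $MS[i]=\max(MS[i-1]-1,\,S[i])$. The paper simply asserts that this recurrence is correct, whereas you additionally supply the soundness/completeness justification (including the degenerate single-character edge case at a phrase boundary, which the paper glosses over), so your proposal matches and, if anything, is more complete.
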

\begin{proof}
	Assume that all $\occ$ LPMEMs have been found, and each LPMEM can be represented by its starting position, $i$, in $P$ and its length, $\ell(i)$.
	%Let $S[1..m]$ be an array such that entry $S[i]$ for each $1\leq i\leq m$
	Let $S[1..m]$ be an array, in which entry $S[i]$, for each $1\leq i\leq m$, stores $\ell(i)$ if $P[i, i+\ell(i)-1]$ is a LPMEM.
	%We store an array $S[1..m]$ such that for each $1\leq i\leq m$, if $P[i, i+\ell-1]$ is a LPMEM, then we set $S[i]$ to $\ell(i)$.
	If there are multiple LPMEMs sharing the same starting position, $i$, in $P$, $S[i]$ stores the largest length.
	It follows that for each $1\leq i\leq m$, $MS[i]$ is equal to $\max(MS[i-1]-1, S[i])$, where $MS[0]$ is 0.
	See ComputingMS\ref{algo_2} for the algorithm.
	%	\begin{algorithm}
		%		\label{alg:ComputingMS_from_LPMEMs}
		%		\caption{ComputingMS($S[1..m]$)}\label{algo_2}
		%		\begin{algorithmic}[1]
			%			\State {$MS[1..m]=\{0 \cdots 0\}$}
			%			\State {$max = 0$}
			%			\For {$i=1, 2, \ldots, m$}
			%			
			%			\If{$max\leq S[i]$}
			%			\State  $max \gets S[i]$
			%			\EndIf
			%			
			%			\State  $MS[i] \gets max$
			%			\If{$max>0$}
			%			\State{$max \gets max-1$}
			%			\EndIf
			%			
			%			\EndFor
			%			\State \textbf{end}
			%		\end{algorithmic}
		%	\end{algorithm}
	Computing array $S[1..m]$ takes $O(\occ+m)$ time and ComputingMS\ref{algo_2} requires $m$ primitive steps; hence, the computation time is $O(\occ+m)$. Note that \footnote{A pattern $P$ with $m$ characters can have at most $\binom{m}{2}$ sub-strings that appear as LPMEM's in text $T[1..n]$.} $\occ\leq m(m-1)/2$.
\end{proof}

%Recall that $T_{rev}$ and $T_{suf}$ denote two Patricia trees with the same leaf labels but in different permutations.
We compute the {\em heavy path decomposition} \cite{harel1984fast} of $T_{rev}$ and $T_{suf}$ mentioned before.
For a node $u$ on a heavy path $H$, let $\hproot(u)$ (resp. $\hp\_leaf(u)$) denote the highest (resp. lowest) node of $H$.
We call the highest node of each heavy path {\em light}.
As $T_{rev}$ and $T_{suf}$ each has $\gamma$ leaves, a path from the root to any leaf on $T_{rev}$ or $T_{suf}$ traverses at most $O(\lg \gamma)$ light nodes.
We give a new definition {\em special skyline node list} borrowing the ideas of {\em skyline node list} from \cite{gagie2013heaviest} and  {\em special nodes} from \cite{abedin2018heaviest}.

%\begin{definition} %(Special Skyline Node List). 
%	For each light node $w \in T_{suf}$, we identify a set, $\SpecialLeaves(w)$, of leaf nodes (which we call special leaves) in $T_{rev}$  and a set, $\SpecialSkylineList(w)$, of internal nodes (which we call special skyline nodes) in $T_{rev}$  as follows: a leaf node $l\in T_{rev}$ is special iff $l$ and $w$ are induced with each other. 
%	An internal node $v$ in $T_{rev}$ is a special skyline node iff $v$ is the common ancestor of two special leaves and the child of $v$ on $\path(v, \loci_1, T_{rev})$ does not induce with $\partner(v/\hp\_leaf(w))$.
%\end{definition}

\begin{definition} %(Special Skyline Node List). 
	\label{def: special_sky}
	For each light node, $w \in T_{suf}$, we identify a set, $\SpecialLeaves(w)$, of leaf nodes in $T_{rev}$  and a set, $\SpecialSkylineList(w)$, of internal nodes in $T_{rev}$  as follows: leaf node $l\in T_{rev}$ is special iff $l$ and $w$ are induced with each other;
	we define special skyline node $v \in \SpecialSkylineList(w)$ if (\romannumeral1) $v$ is a proper ancestor of $\lca(x, y)$ for some special leaves $x$ and $y$, (\romannumeral2) and the child of $v$ on $\path(v, \lca(x, y), T_{rev})$ does not induce with $\partner(v/\hp\_leaf(w))$.
	%Following \cite{abedin2018heaviest}, we identify a set, $\Special(w)$, of nodes in $T_{rev}$ such that i) for each $x\in \SpecialLeaves(w)$, $x\in \Special(w)$; and ii) for each $x, y \in \SpecialLeaves(w)$, $\lca(x, y)\in \Special(w)$.
	Following \cite{abedin2018heaviest}, we identify set $\Special(w)$ of nodes in $T_{rev}$, consisting of the special leaves of $w$ and their lowest common ancestors.
\end{definition}

%It follows that for all light nodes $w\in T_{suf}$, we have $\sum_{w} (|\SpecialLeaves(w)|+|\SpecialSkylineList(w)|) =O(\gamma \lg \gamma)$ \footnote{The reason is that the union set of \SpecialLeaves(w) and \SpecialSkylineList(w) is a subset of $\Special(w)$, where $\Special(w)$ is defined in \cite{{abedin2018heaviest}}} \cite[Definition 11]{abedin2018heaviest}.
%
%For each $1\leq i\leq m-1$, we search for $(P[1..i])^{rev}$ in $T_{rev}$ and for $P[i+1..m]$ in $T_{suf}$; access $T$
%to find $LCP$ of $(P[1..i])^{rev}$  and the path label of the node where the search in $T_{rev}$ terminates, and the $LCP$ of $P[i+1..m]$ and the path label of the node where the search in $T_{suf}$ terminates; take $\loci_1(i)$ and $\loci_2(i)$ to be the loci of those LCPs.
As shown before, for each $1\leq i\leq m-1$, we search for $(P[1..i])^{rev}$ in $T_{rev}$ and for $P[i+1..m]$ in $T_{suf}$; take $\loci_1(i)$ and $\loci_2(i)$ to be the locus of those LCPs.
Let $u$ and $v$ denote some ancestors nodes of $\loci_1(i)$ and $\loci_2(i)$ on $T_{rev}$ and $T_{suf}$, respectively; let $\root_1$ (resp. $\root_2$) denote the root node of $T_{rev}$ (resp. $T_{suf}$).
%Henceforth, if $u$ and $v$ induce a LPMEM, $(\str(u))^{rev}.\str(v)$ \footnote{To compute the matching statistcs, reporting a LPMEM $(\str(u))^{rev}.\str(v)$ verbatim is unnecessary. What we need are its starting position in $P[1..m]$, which is $i-\len(\str(u))+1$, and the length of the LPMEM, which is $\len(\str(u))+\len(\str(v))$.}, then we call $u$ the {\em beginning} node and $v$ the {\em ending} of that LPMEM.
Henceforth, if $u$ and $v$ induce a LPMEM\footnote{To compute the matching statistics, reporting a LPMEM $(\str(u))^{rev}.\str(v)$ verbatim is unnecessary. What we need are its starting position in $P[1..m]$, which is $i-\len(\str(u))+1$, and the length of the LPMEM, which is $\len(\str(u))+\len(\str(v))$.}, $(\str(u))^{rev}.\str(v)$, then $u$ and $v$ are referred to as the the {\em beginning} and the {\em ending} nodes of that LPMEM.
Observe that for the LPMEMs crossing the phrase boundary between $P[i]$ and $P[i+1]$, their beginning nodes stay on $\path(\partner(\loci_2(i)/\loci_1(i)), \loci_1(i), T_{rev})$, and their ending nodes stay on on $\path(\partner(\loci_1(i)/\loci_2(i)), \loci_2(i), T_{suf})$
%Clearly, the beginning node of all LPMEMs crossing the boundary between $P[i]$ and $P[i+1]$ are on $\path(\partner(\loci_2(i)/\loci_1(i)), \loci_1(i), T_{rev})$ and the ending nodes are on $\path(\partner(\loci_1(i)/\loci_2(i)), \loci_2, T_{suf})$. %And for each node $x\in \SpecialSkylineList(w)$,	$x$ and $\partner(x, \loci_2(i))$ induce a LPMEM.

Let $w_1, \cdots, w_k$ denote a sequence of light nodes on $\path(\root_2, \loci_2(i), T_{suf})$, sorted increasingly by the node depths, such that
$\partner(\loci_1(i)/\loci_2(i))$ is contained in the heavy path rooted by $w_1$, and $w_k$ is the lowest light node above $\loci_2(i)$.
Similarly, let $t_{1}, \cdots t_k$ denote the nodes on $\path(\root_2, \loci_2(i), T_{suf})$ such that $t_k = \loci_2(i)$ and
$t_h = \parent(w_{h+1})$ for $h<k$.
Let $\alpha_f$ be $\partner(t_f/\loci_1(i))$ and $\beta_f$ be $\partner(w_f/\loci_1(i))$, for each $1\leq f \leq k$.

%\begin{figure}
%	\centering
%	{\includegraphics[scale=1]{alpha_beta.eps}}
%	\caption{\label{fig-box2} An example of $\alpha_f$ and $\beta_f$ in $T_{rev}$ and $T_{suf}$} 
%\end{figure}

\begin{lemma}
	\label{lem: special_alpha_beta}
	For each $1\leq f \leq k$, $\beta_f$ and $\partner(\beta_f/\loci_2(i))$  induce a LPMEM, and $\alpha_f$  and $\partner(\alpha_f/\loci_2(i))$  induce a LPMEM.% that crossing the boundary between $P[i]$ and $P[i+1]$.
\end{lemma}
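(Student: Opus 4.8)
The plan is to verify each of the three defining conditions of an LPMEM (Definition~\ref{def:LPMEM}, restated via the induced-pair characterization: $v$ and $u$ induce an LPMEM iff $v,u$ are induced, the child of $v$ toward $\loci_1(i)$ is not induced with $u$, and the child of $u$ toward $\loci_2(i)$ is not induced with $v$) for the two claimed pairs. I would treat the pair $(\beta_f, \partner(\beta_f/\loci_2(i)))$ first and then remark that the argument for $(\alpha_f, \partner(\alpha_f/\loci_2(i)))$ is identical in structure. Throughout, write $x := \loci_1(i)$ and $y := \loci_2(i)$ for brevity, and let $u_f := \partner(\beta_f/y)$.

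First I would check that $\beta_f$ lies on $\path(\partner(y/x), x, T_{rev})$ and that $u_f$ lies on $\path(\partner(x/y), y, T_{suf})$, so that both nodes are genuine ancestors of the respective loci and the ``child toward the locus'' is well defined. For $\beta_f$: by definition $\beta_f = \partner(w_f/x)$ is the lowest ancestor of $x$ induced with $w_f$; since $w_f$ is an ancestor of $y$, any node induced with $w_f$ is induced with an ancestor of $y$, hence with $\loci_1(i)$ only down to $\partner(y/x)$ — I would spell out that $\partner(y/x)$ is induced with $y$ hence with $w_f$, so $\beta_f$ is a (possibly improper) descendant of $\partner(y/x)$, placing it on the required path. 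For $u_f = \partner(\beta_f/y)$: it is by definition the lowest ancestor of $y$ induced with $\beta_f$; since $\beta_f$ is induced with $w_f$ which is an ancestor of $y$, we get $u_f$ is an ancestor of (or equal to) $w_f$, and in particular an ancestor of $y$ on the path down from $\partner(x/y)$ — here I would use that $\partner(x/y)$ is the lowest ancestor of $y$ induced with $x$, and $\beta_f$ is a descendant of $\partner(y/x)$ which is induced with $x$, to bound $u_f$ above by $\partner(x/y)$. This establishes that both nodes sit on the two LPMEM-bearing paths identified in the paragraph preceding the lemma.

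Next, condition (i), that $\beta_f$ and $u_f$ are induced: this is immediate, since $u_f = \partner(\beta_f/y)$ is by definition a node induced with $\beta_f$. Condition (ii), that the child $c$ of $\beta_f$ on $\path(\beta_f, x, T_{rev})$ is not induced with $u_f$: suppose for contradiction it were; then $c$ would be an ancestor of $x$ induced with $u_f$, and since $u_f$ is induced with (is an ancestor of) $w_f$ — more carefully, since $u_f \preceq w_f$ and the inducing relation is monotone under taking ancestors in $T_{suf}$ — $c$ would be induced with $w_f$, contradicting that $\beta_f = \partner(w_f/x)$ is the \emph{lowest} such ancestor. Condition (iii), that the child $c'$ of $u_f$ on $\path(u_f, y, T_{suf})$ is not induced with $\beta_f$: this is exactly the minimality in the definition $u_f = \partner(\beta_f/y)$ as the lowest ancestor of $y$ induced with $\beta_f$. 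Hence $\beta_f$ and $u_f$ induce an LPMEM. For the pair $(\alpha_f, \partner(\alpha_f/y))$ the same three arguments go through verbatim, with $t_f = \parent(w_{f+1})$ (or $t_k = y$) playing the role of $w_f$; one only needs that $t_f$ is an ancestor of $y$ in $T_{suf}$, which holds by construction.

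The main obstacle I anticipate is \emph{not} conditions (i) and (iii) — those are essentially definitional — but condition (ii): I need the monotonicity statement ``if a node of $T_{suf}$ is induced with a node of $T_{rev}$, so is every ancestor of it in $T_{suf}$'' together with the fact that $u_f$ is a (weak) ancestor of $w_f$, and I must be careful that $c$ (the child of $\beta_f$ toward $x$) genuinely exists, i.e.\ that $\beta_f \ne x$; in the degenerate case $\beta_f = \loci_1(i)$, condition (ii) is vacuous because there is no such child, and the induced LPMEM has $(\str(\beta_f))^{rev}$ equal to the full matched prefix of $(P[1..i])^{rev}$ — I would dispose of this edge case with one sentence. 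The other delicate point is the precise placement of $u_f$ on the path from $\partner(x/y)$ down to $y$, which relies on combining ``$u_f$ lowest ancestor of $y$ induced with $\beta_f$'' with ``$\beta_f$ descendant of $\partner(y/x)$, which is induced with $x$, hence $\partner(x/y)$ is induced with $\beta_f$'' — getting the direction of this inequality right is where a careless argument could slip.
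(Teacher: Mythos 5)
Your overall strategy is exactly the paper's: conditions (i) and (iii) of the induced-LPMEM characterization are definitional consequences of $u_f=\partner(\beta_f/\loci_2(i))$, and condition (ii) is proved by contradiction using the minimality of $\beta_f=\partner(w_f/\loci_1(i))$ (the paper writes out the $\alpha_f$/$t_f$ case and declares $\beta_f$/$w_f$ symmetric; you do the reverse).

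There is, however, a directional error in the one step that carries the real content. You assert twice that $u_f$ is an \emph{ancestor} of $w_f$ (``we get $u_f$ is an ancestor of (or equal to) $w_f$'', and again in condition (ii)). The correct relation is the opposite: $w_f$ is an ancestor of $\loci_2(i)$ that is induced with $\beta_f$, so the \emph{lowest} ancestor of $\loci_2(i)$ induced with $\beta_f$, namely $u_f$, lies weakly \emph{below} $w_f$; i.e.\ $u_f$ is a descendant of $w_f$ (this is the paper's ``$u$ must be in the sub-tree of $t_f$''). Your condition-(ii) argument needs precisely this descendant relation: from ``$c$ is induced with $u_f$'' you may conclude ``$c$ is induced with $w_f$'' only because inducement propagates to ancestors in $T_{suf}$ and $w_f$ is an ancestor of $u_f$. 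With the direction you state, the implication does not follow and the contradiction with $\beta_f=\partner(w_f/\loci_1(i))$ is not obtained. The fix is one line, and the rest of your write-up (including the vacuous-child edge case $\beta_f=\loci_1(i)$ and the remark that the $\alpha_f$ case substitutes $t_f$ for $w_f$) is sound, but as written the central step is broken by this inversion.
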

\begin{proof}
	%The proof is presented by contradiction.
	Let $u$ denote $\partner(\alpha_f/\loci_2(i))$ in $T_{suf}$.
	Since $\alpha_f$ in $T_{rev}$ and $t_f$ in $T_{suf}$ are induced together, $u$ must be in the sub-tree of $t_f$.
	Suppose that child $c$ of $\alpha_f$ on $\path(\alpha_f, \loci_1(i), T_{rev})$ is induced with $u$.
	As $t_f$ is an ancestor of $u$, $t_f$ in $T_{suf}$ and $c$ in $T_{rev}$ must be induced together, which contradicts with the claim that $\alpha_f$ is $\partner(t_f/\loci_1(i))$.
	Therefore, $c$ can not be induced with $u$ in $T_{suf}$.
	Following the definition of $\partner$, the child of $u$ on $\path(u, \loci_2(i), T_{suf})$ cannot be induced with $\alpha_f$; hence, $u$ and $\alpha_f$ induce a LPMEM.
	The claim that $\partner(\beta_f/\loci_2(i))$ and $\beta_f$ induce a LPMEM follows a similar argument.
\end{proof}

\begin{lemma} 
	\label{lem: u_v_relation_ship}
	%	Let $u$ and $v$ be a pair of nodes on $T_{rev}$ and $T_{suf}$, respectively, such that $u$ and $v$ induce a LPMEM. One can prove that $v \in \path(w_f, t_f, T_{suf})$ if and only if $u\in \path(\alpha_f, \beta_f, T_{rev})$ for some $1\leq f\leq k$.
	%Furthermore, $u \in \{SpecialSkylineList(w_f) \cup \alpha_f \cup \beta_f\}$. %where $\beta_f=partner(w_f/u_1(i))$ and $\alpha_f=partner(t_f/u_1(i))$.
	Given a node $u$ on $T_{rev}$ and a node $v$ on $T_{suf}$ such that $u$ and $v$ induce a LPMEM, if $v \in \path(w_f, t_f, T_{suf})$ for some $1\leq f\leq k$, then $u\in \path(\alpha_f, \beta_f, T_{rev})$.
\end{lemma}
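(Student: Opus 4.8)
## Proof Proposal for Lemma \ref{lem: u_v_relation_ship}

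\medskip
\noindent\emph{Proof plan.}
The plan is to recognize the beginning node $u$ of the LPMEM as $\partner(v/\loci_1(i))$ and then to use a one-line monotonicity property of $\partner$ in its first argument. First I would unpack what it means for $u$ and $v$ to induce a LPMEM crossing the boundary between $P[i]$ and $P[i+1]$: recall from the discussion following Definition~\ref{def:LPMEM} that the node $u$ is an ancestor of $\loci_1(i)$ (being a beginning node it lies on $\path(\partner(\loci_2(i)/\loci_1(i)),\loci_1(i),T_{rev})$), that $v$ is an ancestor of $\loci_2(i)$, that the nodes $u$ and $v$ are induced, and that the child $c$ of $u$ on $\path(u,\loci_1(i),T_{rev})$ is not induced with $v$. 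From the last two facts I would conclude $u=\partner(v/\loci_1(i))$: since $u$ is an ancestor of $\loci_1(i)$ that is induced with $v$, the node $\partner(v/\loci_1(i))$ is a weak descendant of $u$; and if it were a \emph{proper} descendant it would be a weak descendant of $c$, so a common leaf of it and $v$ would lie below $c$, forcing $c$ and $v$ to be induced --- a contradiction.

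The second ingredient is the monotonicity fact: for two nodes $y_1,y_2$ of $T_{suf}$ with $y_1$ an ancestor of $y_2$, the node $\partner(y_1/\loci_1(i))$ is a weak descendant of $\partner(y_2/\loci_1(i))$; informally, the deeper the first argument, the shallower the partner. This is immediate, since every leaf below $y_2$ is also a leaf below $y_1$, so every ancestor of $\loci_1(i)$ induced with $y_2$ is also induced with $y_1$; in particular $\partner(y_2/\loci_1(i))$ is one of the candidates whose lowest member is $\partner(y_1/\loci_1(i))$. Applied to $y_1=w_f$ and $y_2=t_f$ --- which lie on one heavy path of $T_{suf}$, with $w_f$ its root --- this also gives that $\alpha_f=\partner(t_f/\loci_1(i))$ is a weak ancestor of $\beta_f=\partner(w_f/\loci_1(i))$, so $\path(\alpha_f,\beta_f,T_{rev})$ is a genuine ancestor chain.

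Finally I would assemble the pieces. The hypothesis $v\in\path(w_f,t_f,T_{suf})$ says exactly that $w_f$ is a weak ancestor of $v$ and $v$ is a weak ancestor of $t_f$. Applying the monotonicity fact to $w_f$ and $v$ gives that $\beta_f=\partner(w_f/\loci_1(i))$ is a weak descendant of $\partner(v/\loci_1(i))=u$, i.e.\ $u$ is a weak ancestor of $\beta_f$; applying it to $v$ and $t_f$ gives that $u=\partner(v/\loci_1(i))$ is a weak descendant of $\partner(t_f/\loci_1(i))=\alpha_f$, i.e.\ $\alpha_f$ is a weak ancestor of $u$. As $\alpha_f$, $u$, and $\beta_f$ are all ancestors of $\loci_1(i)$ they lie on a single root-to-leaf path of $T_{rev}$, so the relations ``$\alpha_f$ is a weak ancestor of $u$'' and ``$u$ is a weak ancestor of $\beta_f$'' together say precisely that $u\in\path(\alpha_f,\beta_f,T_{rev})$, as claimed.

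The step I expect to be the main obstacle is keeping the orientation straight: both the identification $u=\partner(v/\loci_1(i))$ and, above all, the \emph{direction} of the partner monotonicity (descending the first argument raises the partner, it does not lower it). Once those are pinned down, the conclusion is a two-line chaining of weak-ancestor relations. The boundary situations ($f=1$, where $w_1$ carries $\partner(\loci_1(i)/\loci_2(i))$; $f=k$, where $t_k=\loci_2(i)$; and the cases where $u$ or $v$ coincides with a locus) need no separate treatment, since every statement used above speaks only of weak-ancestor relations and holds verbatim there.
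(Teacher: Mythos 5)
Your proof is correct and establishes exactly the two weak-ancestor relations the paper proves ($\alpha_f$ above $u$ above $\beta_f$), resting on the same underlying fact that the induced relation propagates to ancestors; the paper merely packages the two halves as direct contradiction arguments (assuming $u$ strictly above $\alpha_f$, resp.\ strictly below $\beta_f$) instead of routing them through the explicit identification $u=\partner(v/\loci_1(i))$ and a partner-monotonicity observation. Both that identification and the direction of the monotonicity are stated and justified correctly in your write-up, so there is no gap.
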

\begin{proof}
	The proof is similar to the one shown as \cite[Lemma 12]{abedin2018heaviest}. %using the contradiction arguments.
	%We give the proof to the only-if case, while the if-case can be argued in a symmetrical way.
	Suppose u is a proper ancestor of $\alpha_f$. 
	Since $\alpha_f$ and $t_f$ are induced together, node $v$, as an ancestor of $t_f$, is also induced with $\alpha_f$.
	Due to this, $u$ and $v$ cannot induce a LPMEM, which generates a contradiction; therefore, $u$ must be in the sub-tree of $\alpha_f$.
	Suppose that u is in the proper sub-tree of $\beta_f$. 
	Since $u$ and $v$ are induced together, and since $v$ is in the sub-tree rooted by $w_f$, $u$ and $w_f$ are induced together, which contradicts with the claim that $\beta_f$ is $\partner(w_f/\loci_1(i))$.
	Therefore, $u$ must be an ancestor of $\beta_f$.
\end{proof}

%\begin{lemma} %TODO
%	\label{lem: u_v_relation_ship}
%	Let $u$ and $v$ be a pair of nodes on $T_{rev}$ and $T_{suf}$, respectively, such that $u$ and $v$ induce a LPMEM. One can prove that $v \in \path(w_f, t_f, T_{suf})$ if and only if $u\in \path(\alpha_f, \beta_f, T_{rev})$ for some $1\leq f\leq k$.
%	%Furthermore, $u \in \{SpecialSkylineList(w_f) \cup \alpha_f \cup \beta_f\}$. %where $\beta_f=partner(w_f/u_1(i))$ and $\alpha_f=partner(t_f/u_1(i))$.
%\end{lemma}
%\begin{proof}
%	%The proof is similar to the one shown \cite[Lemma 12]{abedin2018heaviest}. %using the contradiction arguments.
%	We give the proof to the only-if case, while the if-case can be argued in a symmetrical way.
%	Suppose u is a proper ancestor of $\alpha_f$ . 
%	Since $\alpha_f$ and $t_f$ are induced together, $v$ (as an ancestor of $t_f$) is also induced with $\alpha_f$.
%	Therefore, $u$ and $v$ can not induce a LPMEM, which is a contradiction.
%	Therefore, $u$ must be in the sub-tree of $\alpha_f$.
%	Suppose u is in the proper sub-tree of $\beta_f$. 
%	Since $u$ and $v$ are induced together and $v$ is in the sub-tree rooted at $w_f$, $u$ and $w_f$ are induced together, which contradicts with the claim that $\beta_f$ is $\partner(w_f/\loci_1(i))$.
%	Therefore, $u$ must be an ancestor of $\beta_f$.
%\end{proof}

\begin{lemma} 
	\cite[Lemma 14]{abedin2018heaviest}
	\label{lem: reference}
	For each $1\leq f \leq k$  and any $x \in \path(\alpha_f , \beta_f , T_{rev})/\{\alpha_f \}$, $\partner(x/\loci_2(i))=\partner(x/t_f )=\partner(x/\hp\_leaf(w_f ))$ always holds.
\end{lemma}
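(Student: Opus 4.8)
The plan is to show that the three partners coincide by showing that none of $t_f$, $\hp\_leaf(w_f)$, nor $\loci_2(i)$ can "see deeper" into $T_{rev}$ than the others when inducing with a node $x$ strictly below $\alpha_f$ on $\path(\alpha_f,\beta_f,T_{rev})$. The key structural fact is that $w_f, t_f, \loci_2(i)$ all lie on a single root-to-leaf path in $T_{suf}$, with $w_f$ an ancestor of $t_f$ and $t_f$ an ancestor of (or equal to) $\loci_2(i)$; moreover $t_f$ and $\hp\_leaf(w_f)$ are on the same heavy path, so $t_f$ is an ancestor of $\hp\_leaf(w_f)$ or vice versa, and in either case they share the same leaf descendants below $\hp\_leaf(w_f)$. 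Because $\partner(\cdot/y)$ is monotone in $y$ (a deeper $y$ induces with at most as high an ancestor of $\loci_1(i)$, hence its partner can only go lower in $T_{rev}$), we immediately get the chain of ancestor relations $\partner(x/w_f) \preceq \partner(x/t_f)$, $\partner(x/w_f)\preceq \partner(x/\hp\_leaf(w_f))$, and $\partner(x/\hp\_leaf(w_f))\preceq \partner(x/\loci_2(i))$ in $T_{rev}$. It therefore suffices to prove the single reverse inclusion: $\partner(x/\loci_2(i))$ is an ancestor of (equal to) $\beta_f$'s partner relatives, or more precisely that $\partner(x/w_f) = \partner(x/\loci_2(i))$, since everything is squeezed in between.

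First I would invoke Lemma~\ref{lem: u_v_relation_ship}: since $x\in\path(\alpha_f,\beta_f,T_{rev})$ and $x\neq\alpha_f$, any node $v\in T_{suf}$ that induces a LPMEM with $x$ must lie on $\path(w_f,t_f,T_{suf})$. In particular the LPMEM guaranteed by (a suitable variant of) Lemma~\ref{lem: special_alpha_beta} — namely that $x$ and $\partner(x/\loci_2(i))$ induce a LPMEM, by the same argument as in Lemma~\ref{lem: special_alpha_beta} applied to $x$ in place of $\alpha_f$ — tells us that $\partner(x/\loci_2(i))$ sits on $\path(w_f,t_f,T_{suf})$. Next I would use the defining property of heavy-path decomposition: $w_f$ is a light (highest) node of its heavy path $H$, and $t_f$ lies on $H$ as well (it is $\parent(w_{f+1})$, the deepest node of $H$ above $\loci_2(i)$), so the entire segment $\path(w_f,t_f,T_{suf})$ is contained in $H$, and $\hp\_leaf(w_f)$ is the lowest node of $H$, an ancestor of $t_f$ or below it. Crucially, every node on $\path(w_f,t_f,T_{suf})$ has $\hp\_leaf(w_f)$ as a descendant along the heavy path, hence shares the leftmost/rightmost leaves of the subtree below $\hp\_leaf(w_f)$; so "being induced with $x$" is the same condition for $w_f$, for $t_f$, for $\hp\_leaf(w_f)$, and for $\loci_2(i)$, as long as the witnessing leaf can be taken below $\hp\_leaf(w_f)$.

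The heart of the argument is then: let $x' = \partner(x/\loci_2(i))$, which we have placed on $\path(w_f,t_f,T_{suf})\subseteq H$. Since $x$ and $x'$ are induced, they share a common leaf descendant $\ell$; I would argue $\ell$ can be chosen in the subtree of $\hp\_leaf(w_f)$, because $x'$ lies on the heavy path and its subtree's "heavy side" goes through $\hp\_leaf(w_f)$ — any induced witness leaf hanging off a light child of $x'$ higher up would contradict $x'$ being the \emph{lowest} ancestor of $\loci_2(i)$ inducing with $x$, as one could then descend further on $H$. With $\ell$ below $\hp\_leaf(w_f)$, the leaf $\ell$ simultaneously witnesses that $x$ is induced with $t_f$, with $\hp\_leaf(w_f)$, and with $\loci_2(i)$. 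Combined with monotonicity this pins $\partner(x/t_f)=\partner(x/\hp\_leaf(w_f))=\partner(x/\loci_2(i))=x'$, which is the claim. I expect the main obstacle to be exactly this last "witness-leaf can be pushed below $\hp\_leaf(w_f)$" step: one must be careful that $x'$ being a \emph{proper} ancestor of $\loci_2(i)$ (the case $x'=\loci_2(i)$ needs $x=\beta_f$'s partner, handled separately or excluded since $x\neq\alpha_f$ forces $x'$ strictly above $t_f$ when $x$ is strictly below $\alpha_f$) lets us use its lowest-ancestor minimality to force the common descendant onto the heavy path; I would spell this out by contradiction, assuming the only common leaf descendants of $x$ and $x'$ lie strictly above $\hp\_leaf(w_f)$ in $T_{suf}$ and deriving that then $\parent$ of the relevant node already induces with $x$, violating minimality — essentially the same move used in the proof of Lemma~\ref{lem: special_alpha_beta}.
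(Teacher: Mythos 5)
The paper itself offers no proof of this lemma; it is imported verbatim as Lemma~14 of Abedin et al.\ \cite{abedin2018heaviest}. Judged on its own, your argument has the right first move (locate $x'=\partner(x/\loci_2(i))$ on $\path(w_f,t_f,T_{suf})$), but both the reduction you set up and the central step are incorrect. The identity you reduce to, $\partner(x/w_f)=\partner(x/\loci_2(i))$, is false in general: every $x\in\path(\alpha_f,\beta_f,T_{rev})$ is an ancestor of $\beta_f=\partner(w_f/\loci_1(i))$ and is therefore itself induced with $w_f$, so $\partner(x/w_f)=w_f$, whereas $\partner(x/\loci_2(i))$ may lie strictly below $w_f$. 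Relatedly, the link $\partner(x/\hp\_leaf(w_f))\preceq\partner(x/\loci_2(i))$ in your monotonicity chain is unjustified, since for $f<k$ the nodes $\hp\_leaf(w_f)$ and $\loci_2(i)$ are incomparable (the path to $\loci_2(i)$ leaves the heavy path at $t_f$ through the light child $w_{f+1}$). Most seriously, the claim that the witness leaf $\ell$ for $x$ and $x'$ ``can be pushed below $\hp\_leaf(w_f)$'' is the opposite of the truth: such an $\ell$ would be a descendant of $t_f$, making $x$ and $t_f$ induced, which contradicts the fact that $x$ is a proper descendant of $\alpha_f=\partner(t_f/\loci_1(i))$ on $\path(\root_1,\loci_1(i),T_{rev})$. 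The minimality of $x'$ actually forces every witness leaf to hang off a child of $x'$ \emph{not} on the path towards $\loci_2(i)$. (Your auxiliary claim that $x$ and $\partner(x/\loci_2(i))$ always induce a LPMEM also fails for general $x$; the argument of Lemma~\ref{lem: special_alpha_beta} uses that $\alpha_f$ is itself a partner, so that its child towards $\loci_1(i)$ is not induced with $t_f$.)

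The lemma has a short direct proof needing neither LPMEMs nor witness-leaf relocation. Since $x$ is a proper descendant of $\alpha_f$ and an ancestor of $\loci_1(i)$, $x$ is not induced with $t_f$ (otherwise $\alpha_f$ would not be the lowest ancestor of $\loci_1(i)$ induced with $t_f$), hence not induced with any descendant of $t_f$. Since $x$ is an ancestor of $\beta_f$, it is induced with $w_f$. Each of $t_f$, $\loci_2(i)$ and $\hp\_leaf(w_f)$ is a descendant of $t_f$, so its set of ancestors splits into descendants of $t_f$ (none induced with $x$) and the common set of proper ancestors of $t_f$. The lowest ancestor induced with $x$ is therefore the same node in all three cases, namely the lowest node of $\path(w_f,t_f,T_{suf})$ induced with $x$, which exists because $w_f$ qualifies and is a proper ancestor of $t_f$ because $t_f$ does not.
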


%From Lemma \ref{lem: reference} and Definition \ref{def: special_sky}, we know that 
For any node $x\in \{\SpecialSkylineList(w_f)\symbol{92}(\alpha_f \cup \beta_f)\}$, it follows that $x$ and $\partner(x/\hp\_leaf(w_f))$ induce a LPMEM because of Lemma \ref{lem: reference} and Definition \ref{def: special_sky}.

%From Lemma \ref{lem: reference} and Definition \ref{def: special_sky}, we know that for each node $x\in \{\SpecialSkylineList(w_f)\symbol{92}(\alpha_f \cup \beta_f)\}$, $x$ and $\partner(x/\hp\_leaf(w_f))$ always induce a LPMEM.

\begin{lemma}
	For each $v\in \SpecialSkylineList(w_f)$, it follows that $v \in \Special(w_f)$.
\end{lemma}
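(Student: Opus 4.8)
The plan is to show that every special skyline node of $w_f$ is in fact one of the lowest-common-ancestors in $\Special(w_f)$, by exploiting the structural constraint in part~(\romannumeral1) of Definition~\ref{def: special_sky}. Recall that $\Special(w_f)$ consists of the special leaves of $w_f$ together with all pairwise lowest common ancestors of those leaves; equivalently (this is the standard fact about LCA-closed sets used in \cite{abedin2018heaviest}), $\Special(w_f)$ is exactly the set of nodes of the Steiner-like tree obtained by contracting $T_{rev}$ onto the special leaves, so a node $v$ lies in $\Special(w_f)$ if and only if $v$ has at least two distinct children in $T_{rev}$ each of whose subtrees contains a special leaf, or $v$ is itself a special leaf. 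So the goal reduces to: given $v\in\SpecialSkylineList(w_f)$, exhibit two distinct children of $v$, each with a special leaf in its subtree.

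First I would unpack the hypothesis $v\in\SpecialSkylineList(w_f)$. By part~(\romannumeral1), there exist special leaves $x,y$ of $w_f$ with $v$ a proper ancestor of $\lca(x,y)$. Since $\lca(x,y)$ is in the subtree of exactly one child $c_1$ of $v$, and since both $x$ and $y$ descend from $\lca(x,y)$, the child $c_1$ already has (at least two, hence) one special leaf below it — say $x$. So I have one child $c_1$ of $v$ whose subtree contains a special leaf. The work is to produce a \emph{second} such child. This is where part~(\romannumeral2) of the definition enters: it says the child of $v$ on $\path(v,\lca(x,y),T_{rev})$ — which is precisely $c_1$ — does not induce with $\partner(v/\hp\_leaf(w_f))$. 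I would then argue by contradiction: suppose $v$ has no other child whose subtree contains a special leaf. Then every leaf of $T_{rev}$ in the subtree of $v$ that is induced with $w_f$ lies in the subtree of $c_1$. Now consider $u:=\partner(v/\hp\_leaf(w_f))$; by definition of $\partner$, $v$ and $u$ are induced, so some leaf $\ell$ is a common descendant of $v$ (in $T_{rev}$) and $u$ (in $T_{suf}$). Since $u$ is an ancestor of $\hp\_leaf(w_f)$ and $w_f\preceq \hp\_leaf(w_f)$ on the same heavy path, $u$ is an ancestor of $w_f$ as well, so $\ell$ is also induced with $w_f$; hence $\ell$ is a leaf of $T_{rev}$ in the subtree of $v$ induced with $w_f$, which by our supposition forces $\ell$ to be in the subtree of $c_1$. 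But then $c_1$ and $u$ share the leaf descendant $\ell$, i.e. $c_1$ and $u=\partner(v/\hp\_leaf(w_f))$ are induced — contradicting part~(\romannumeral2). Therefore a second child $c_2\ne c_1$ of $v$ with a special leaf below it must exist, and $v$ witnesses two special leaves in distinct subtrees of its children, so $v\in\Special(w_f)$.

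I would close by noting the one edge case: if $v=\lca(x,y)$ itself were allowed, the argument about "proper ancestor" would degenerate, but Definition~\ref{def: special_sky}(\romannumeral1) explicitly requires $v$ to be a \emph{proper} ancestor of $\lca(x,y)$, so $c_1$ is well-defined and strictly below $v$, and the two children $c_1,c_2$ are genuinely distinct. The main obstacle, and the step to state carefully, is the interlocking of three trees in the induce/partner bookkeeping: one must be precise that $u=\partner(v/\hp\_leaf(w_f))$ being an ancestor of $w_f$ in $T_{suf}$ propagates the "induced with $w_f$" property to the witnessing leaf $\ell$, and that this is exactly what collides with clause~(\romannumeral2). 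Everything else — the characterization of $\Special(w_f)$ as an LCA-closed (Steiner) set, and the "two children with special descendants" criterion for membership — is standard and can be cited from \cite{abedin2018heaviest} rather than reproved.
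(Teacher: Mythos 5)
Your overall strategy is the same as the paper's: take the leaf $\ell$ that witnesses the inducing of $v$ with $u=\partner(v/\hp\_leaf(w_f))$, argue that $\ell$ is a special leaf lying outside the subtree of the child $c_1$ (otherwise $c_1$ and $u$ would be induced, contradicting clause~(\romannumeral2)), and conclude that $v$ is the $\lca$ of two special leaves, hence $v\in\Special(w_f)$. Your reformulation via ``two distinct children each containing a special leaf'' is just the standard characterization of the LCA-closure and is a harmless repackaging of the paper's direct argument.

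There is, however, one step that is wrong as written: the justification that $\ell$ is induced with $w_f$. You assert that $u=\partner(v/\hp\_leaf(w_f))$ is an \emph{ancestor} of $w_f$ and deduce that $\ell$, being a descendant of $u$, is therefore induced with $w_f$. That inference runs the wrong way: a descendant of an ancestor of $w_f$ need not be a descendant of $w_f$. What you actually need---and what is true---is that $u$ is a \emph{descendant} of $w_f$ (possibly $w_f$ itself). This holds because the special leaf $x$ lies below $v$, so $v$ and $w_f$ are induced; since $w_f$ is the highest node of its heavy path and $\hp\_leaf(w_f)$ the lowest, $w_f$ is an ancestor of $\hp\_leaf(w_f)$ that is induced with $v$, and therefore the \emph{lowest} such ancestor $u$ lies on the heavy path at or below $w_f$. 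With the relation corrected, every common leaf descendant $\ell$ of $v$ and $u$ is a descendant of $w_f$, hence belongs to $\SpecialLeaves(w_f)$, and the rest of your argument (including the contradiction with clause~(\romannumeral2)) goes through. So the proof is repairable with a one-line fix, but as stated the direction of the ancestor/descendant relation---and with it the claim that $\ell$ is special---is not justified.
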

\begin{proof}
	Let $x$ and $y$ be a pair of special leaves under $v$ such that child $c$ of $v$ on $\path(v, \lca(x, y), T_{rev})$ does not induce with $\partner(v/\hp\_leaf(w_f))$.
	Since $\partner(v/\hp\_leaf(w_f))$ and $v$ are induced together by some special leaf $\ell$ under the sibling node of $c$, it follows that $\lca(x, \ell)$ or $\lca(y, \ell)$ is $v$, and $v\in \Special(w_f)$.
\end{proof}

\begin{lemma}
	\label{lem: all_LPMEM}
	%	Given a pair of nodes $u$ (as an ancestor of $\loci_1(i)$) and $v$ (as an ancestor of $\loci_2(i)$) on $T_{rev}$ and $T_{suf}$, respectively, such that $u$ and $v$ induce a LPMEM, one can prove that $u\in \{\alpha_f \cup \beta_f \cup \SpecialSkylineList(w_f)\}$.
	%Furthermore, $u \in \{SpecialSkylineList(w_f) \cup \alpha_f \cup \beta_f\}$. %where
	Given a node $u$ on $\path(\alpha_f, \beta_f, T_{rev})$ and an ancestor node $v$ of $\loci_2(i)$ on $T_{suf}$, if $u$ and $v$ induce a LPMEM, and $u\notin \{\SpecialSkylineList(w_f)\cup \alpha_f \cup \beta_f\}$, then $u=\lca(\ell, \ell')$ for some pair of $\ell, \ell' \in \SpecialLeaves(w_f)$.
\end{lemma}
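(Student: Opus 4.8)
The plan is to argue by contradiction, mirroring the structure of the corresponding lemma in \cite{abedin2018heaviest} (their Lemma 13) but adapting it to the LPMEM setting and to the definition of \SpecialSkylineList. Assume $u$ and $v$ induce a LPMEM, that $u \in \path(\alpha_f, \beta_f, T_{rev})$, and that $u \notin \{\SpecialSkylineList(w_f) \cup \alpha_f \cup \beta_f\}$; we want $u = \lca(\ell, \ell')$ for some $\ell, \ell' \in \SpecialLeaves(w_f)$. First I would record the consequences of $u$ inducing a LPMEM with $v$: writing $c_1$ for the child of $u$ on $\path(u, \loci_1(i), T_{rev})$, the node $c_1$ is \emph{not} induced with $v$; and since $v$ is an ancestor of $\loci_2(i)$ lying on the heavy path of $w_f$ (this needs to be checked from the hypothesis, using Lemma \ref{lem: u_v_relation_ship} in the reverse direction together with Lemma \ref{lem: reference} to identify $v$ with $\partner(u/\hp\_leaf(w_f))$ when $u \neq \alpha_f$), we get that $u$ is induced with $\hp\_leaf(w_f)$ but $c_1$ is not.

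Next I would argue that $u$ must be the LCA of two special leaves. Because $u$ is induced with $\hp\_leaf(w_f)$ — equivalently, induced with the light node $w_f$, since $w_f$ is on the same heavy path and inducing is monotone up the heavy path — there is at least one special leaf in the subtree of $u$ (a leaf descendant witnessing the common descendant with $w_f$). The key point is to produce \emph{two} special leaves with $u$ as their LCA: one in the subtree of $c_1$ and one outside it. A special leaf in the subtree of $u$ but not under $c_1$ is forced by the fact that $u$ is induced with $w_f$ while $c_1$ is not; so the witnessing leaf descendant sits under some child $c_2 \neq c_1$ of $u$. For the other side, I would descend from $c_1$ toward $\loci_1(i)$: since $u$ itself is induced with $v$ (hence with $t_f$, hence with $w_f$) one shows that the subtree of $c_1$ still meets the subtree of $w_f$ at a leaf unless that contradicts $u = \partner$-type relations; more carefully, the LPMEM condition gives that $u$ is the \emph{lowest} ancestor of $\loci_1(i)$ inducing with $v$, which combined with $u \neq \beta_f$ and Lemma \ref{lem: u_v_relation_ship} forces a special leaf strictly below $c_1$ as well. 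Having two special leaves $\ell$ (under $c_2$) and $\ell'$ (under $c_1$) with $\lca(\ell, \ell') = u$ gives the conclusion.

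The subtle part — and where I would spend the most care — is ruling out $u \in \SpecialSkylineList(w_f)$: I must show that the two special leaves I just produced, together with the LPMEM condition, actually force $u$ to satisfy clauses (i) and (ii) of Definition \ref{def: special_sky}, so that excluding $u$ from \SpecialSkylineList{} is exactly what lets me conclude it is a "plain" LCA of special leaves rather than a skyline node. Concretely, clause (i) is immediate once we have $\ell, \ell'$ with $u$ a proper ancestor of $\lca(\ell,\ell')$, or $u = \lca(\ell,\ell')$ itself — here I need to be careful whether the definition wants $u$ to be a \emph{proper} ancestor, and handle the boundary case $u = \lca(\ell, \ell')$ by possibly choosing a third special leaf to push $\lca$ strictly below $u$, or else by observing that in that boundary case $u$ is precisely the asserted $\lca(\ell,\ell')$ and we are done anyway. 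Clause (ii) — that the child of $u$ on $\path(u, \lca(\ell,\ell'), T_{rev})$ fails to induce with $\partner(u/\hp\_leaf(w_f))$ — is essentially the LPMEM non-extensibility condition on the $T_{rev}$ side, after using Lemma \ref{lem: reference} to replace $v$ by $\partner(u/\hp\_leaf(w_f))$. So the logical shape is: the hypotheses already make $u$ "skyline-eligible"; since we have additionally assumed $u \notin \SpecialSkylineList(w_f)$, the only way out is that $u$ coincides with $\lca(\ell,\ell')$ for a pair of special leaves, which is the claim. I would present the contrapositive cleanly: if $u$ is neither $\alpha_f$, nor $\beta_f$, nor an $\lca$ of two special leaves, then $u$ cannot be a proper ancestor of any such $\lca$ with the non-inducing child property, hence $u$ induces a LPMEM with $v$ only if $u \in \SpecialSkylineList(w_f)$ — contradiction. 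The main obstacle is bookkeeping the case $u = \alpha_f$ versus $u \neq \alpha_f$ in the application of Lemma \ref{lem: reference} (which excludes $\alpha_f$), and the boundary case where $\lca$ of the two produced leaves equals $u$ exactly; both are handled by a short additional argument rather than new ideas.
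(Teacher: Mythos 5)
Your overall shape matches the paper's proof: identify $v$ with $\partner(u/\loci_2(i))=\partner(u/t_f)$ via Lemma~\ref{lem: reference} (using $u\neq\alpha_f$), produce two special leaves under distinct children of $u$, and use the exclusion $u\notin\SpecialSkylineList(w_f)$ to close the boundary case. However, the step that actually produces the two leaves is based on a false premise. You claim that a special leaf outside the subtree of $c_1$ (the child of $u$ on $\path(u,\loci_1(i),T_{rev})$) ``is forced by the fact that $u$ is induced with $w_f$ while $c_1$ is not.'' But $c_1$ \emph{is} induced with $w_f$: since $u$ is a proper ancestor of $\beta_f$ and $\beta_f$ is an ancestor of $\loci_1(i)$, the subtree of $c_1$ contains $\beta_f$, and $\beta_f=\partner(w_f/\loci_1(i))$ is induced with $w_f$ by definition. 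What $c_1$ fails to be induced with is $v$ (the LPMEM non-extensibility condition), not $w_f$. You have the roles of the two sides swapped: the leaf \emph{inside} $c_1$'s subtree is the common descendant of $\beta_f$ and $w_f$ (special for free), while the leaf \emph{outside} $c_1$'s subtree is a common descendant of $u$ and $v$ witnessing their induction --- it cannot lie under $c_1$ precisely because $c_1$ and $v$ are not induced, and it is special because $v$ is a descendant of $w_f$ (which itself must be argued: $v=\partner(u/t_f)$ and $u$ is induced with $w_f$, so the lowest ancestor of $t_f$ induced with $u$ lies at or below $w_f$). Your justification for the leaf ``strictly below $c_1$'' (that $u$ is the lowest ancestor of $\loci_1(i)$ inducing with $v$) in fact produces the \emph{other} leaf, the one not under $c_1$.

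This is not mere bookkeeping: as written, the argument for the second special leaf does not go through, and without it you cannot conclude $u=\lca(\ell,\ell')$. Once the two leaves are constructed correctly (one under $c_1$ via $\beta_f$ and $w_f$, one under a sibling child via $u$ and $v$), the conclusion is immediate --- their LCA is $u$ itself --- and the skyline-exclusion discussion in your third paragraph becomes a safety net for the case where both leaves accidentally fall under one child, which the corrected construction already rules out.
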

\begin{proof}
	%Lemma \ref{lem: special_alpha_beta} shows that if $u$ is $\alpha_f$, then 
	%$u$ and $\partner(\alpha_f/\loci_2(i))$ induce a LPMEM.
	%Similarly, the case that $u=\beta_f$ follows.
	%Suppose that $u$ is neither $\alpha_f$ nor $\beta_f$, and $u$ and $v$ induce a LPMEM.
	%It follows that $v = \partner(u/\loci_2(i))$ and $v$ is a descendent of $w_f$.
	%Lemma \ref{lem: reference} shows that $v$ is a proper ancestor of $t_f$.
	Since $\beta_f$ and $w_f$ are induced together, and  since $u$ is a proper ancestor of $\beta_f$, there is a special leaf $\ell$ as the common descendant of $\beta_f$ and $u$.
	As $u$ and $v$ induce a LPMEM, and as $u$ is not $\alpha_f$, $v = \partner(u/\loci_2(i))=\partner(u/t_f)$ by Lemma \ref{lem: reference}.
	Since $u$ and $w_f$ are induced together by $\ell$, $v$ must be a descendant of $w_f$.
	There is at least one leaf $\ell'$ under $u$ that makes $u$ and $v$ induced with each other, and $\ell'$ is a special leaf because it is induced with $w_f$.
	If $\ell'$ and $\ell$ are the same, then $\beta_f$ and $v$ are induced together, which contradicts with the claim that $u$ and $v$ induce a LPMEM.
	There are at least two special leaves $\ell$ and $\ell'$ under $u$.
	Since $u \notin \SpecialSkylineList(w_f)$, $u = \lca(\ell, \ell')$. The proof completes.
	%Since the child of $u$ on $\path(u, \beta_f, T_{rev})$ does not induce with $\partner(u/t_f)$, $u\in \SpecialSkylineList(w_f)$.
	%It follows that if $x\in \{\SpecialSkylineList(w_f)\symbol{92}(\alpha_f \cup \beta_f)\}$,	then $x$ and $\partner(x/t_f)$ induce a LPMEM.
\end{proof}

\paragraph{The Second Solution.} %First, we present the data structure part.
%All data structures shown in the first solution are also required in this new solution.
We apply the induced sub-tree defined in \cite[Definition 18]{abedin2018heaviest} to support finding LPMEMs.
An induced sub-tree $T_{rev}(w)$ w.r.t a light node $w\in T_{suf}$ is a tree having exactly $|\Special(w)|$ nodes such that i) each node $l \in \Special(w)$ has a corresponding node $\hat{l}$ in $T_{rev}(w)$; and ii) for each pair of $\ell, \ell' \in \Special(w)$, node $\lca(\ell, \ell')$ in $T_{rev}$ has a corresponding node, as $\lca$ of $\hat{\ell}$ and $\hat{\ell'}$, in $T_{rev}(w)$.
To support finding LPMEMs, we revise the induced sub-trees as follows:
For each internal node $\hat{v}$ of $T_{rev}(w)$, we maintain a pointer $e_0$ pointing to its lowest proper ancestor $\hat{v}'$ (if exists) that belongs to $\SpecialSkylineList(w)$ and a pointer $e_1$ pointing to its corresponding node $v$ in $T_{rev}$;
for each special skyline node $\hat{v}$ of $T_{rev}(w)$, we maintain a pointer $e_2$ pointing to $\partner(v/\hp\_leaf(w))$ in $T_{suf}$.

Since $\sum_{w} |\Special(w)|=O(\gamma \lg \gamma)$ for all light nodes $w\in T_{suf}$, all revised induced sub-trees totally use $O(\gamma\lg \gamma)$ words of space.
Abedin et al. \cite[Lemma 19]{abedin2018heaviest} showed that given a node $\ell \in \Special(w)$, one can find its corresponding node $\hat{\ell}$ in $T_{rev}(w)$ in $O(\lg \lg \gamma)$ time by maintaining an $O(\gamma\lg \gamma)$ word data structure.
In addition, the data structures introduced in the first solution are also required, occupying extra $O(\delta \lg \frac{n}{\delta})$ words of space.
The overall space cost is $O(\delta \lg \frac{n}{\delta})+O(\gamma\lg \gamma)$ words.

We show how to find LPMEMs between $P$ and $T$.
By Lemma \ref{lemma:loci_finding}, we can find locus $\loci_1(i)$ and $\loci_2(i)$ on $T_{rev}$ and $T_{suf}$ in $O(m^2+m\lg n)$ time for all $1\leq i \leq m-1$.
We compute $\partner(\loci_2(i)/\loci_1(i))$ and $\partner(\loci_1(i)/\loci_2(i))$. %and use $u_1^*(i)$ and $u_2^*(i)$ to denote them.
%Let $u_1^*(i)$ and $u_2^*(i)$ denote $\partner(\loci_2(i)/\loci_1(i))$ and $\partner(\loci_1(i)/\loci_2(i))$, respectively.
Each $\partner$ operation takes $O(\lg \lg \gamma)$ time by part i) of Lemma \ref{lemma:partner_finding}.
If $\loci_1(i)$ and $\loci_2(i)$ are induced with each other, then there is only one LPMEM crossing the phrase boundary between $P[i]$ and $P[i+1]$, which is $(\str(\loci_1(i)))^{rev}.\str(\loci_2(i))$, and we continue to search for LPMEMs crossing the phrase boundary between $P[i+1]$ and $P[i+2]$.
%Otherwise, we continue the searching algorithm upon $\loci_1(i)$ and $\loci_2(i)$.
Otherwise, we iterate through $\path(\root_2, \loci_2(i), T_{suf})$ to find the light nodes $w_1, \cdots, w_k$ and nodes $t_1, \cdots, t_k$ as described before; compute $\alpha_f$ and $\beta_f$ for all $1\leq f \leq k$.
Since $\beta_1=\partner(w_1/\loci_1(i))=\loci_1(i)$, and since $\alpha_k=\partner(t_k/\loci_1(i))=\partner(\loci_2(i)/\loci_1(i))$, each of those LPMEMs has its beginning node on $\path(\beta_1, \alpha_k, T_{rev})$.
We traverse the path from $\beta_1$ to $\alpha_k$.
In general, the beginning nodes on the sub-path from $\beta_f$ to $\alpha_f$ consist of 3 parts: $\alpha_f$, $\beta_f$, and some special nodes between $\alpha_f$ (excluding $\alpha_f$) and $\beta_f$ (excluding $\beta_f$).
Finding LPMEMs induced by $\alpha_f$ and $\partner(\alpha_f/\loci_2(i))$ or $\beta_f$ and $\partner(\beta_f/\loci_2(i))$ is straightforward, taking $O(\lg \lg \gamma)$ time.
There are $2k$ such LPMEMs, and finding all of them takes $O(\lg \gamma \lg \lg \gamma)$ time, since $k=O(\lg \gamma)$.

It remains to find the LPMEMs with their beginning nodes between $\alpha_f$ (excluding $\alpha_f$) and $\beta_f$ (excluding $\beta_f$).
Given an internal tree node $x$, we use $\lMost(x)$ (resp. $\rMost(x)$) to denote the index of the leftmost (resp. rightmost) leaf descendant of $x$.
Since $\beta_f$ is $\partner(w_f/\loci_1(i))$, there exists at least a special leaf of $T_{rev}$ as a descendant of $\beta_f$ that belongs to $\SpecialLeaves(w_f)$, and we use $\ell$ to denote the leftmost one.
Let $\ell'$ denote the rightmost special leaf among the first $\lMost(x)-1$ leaves of $T_{rev}$ and $\ell''$ denote the leftmost special leaf on the right-hand side of the $\rMost(w_f)$-th leaf of $T_{rev}$.
%We find the rightmost special leaf, $\ell'$, from the first $\lMost(x)-1$ leaves of $T_{rev}$; find the leftmost special leaf, $\ell''$, after the $\rMost(w_f)$-th leaf of $T_{rev}$.
It follows that the lowest special node above $\beta_f$, denoted by $v$, is the lower one between $\lca(\ell', \ell)$ and $\lca(\ell, \ell'')$.
Once $v$ is found, we check whether it is the beginning node of some LPMEM: If the child of $v$ on $\path(v, \loci_1(i), T_{rev})$ does not induce with $\partner(v/\loci_1(i))$, then we report a LPMEM induced by $v$ and $\partner(v/\loci_1(i))$.
Since $v$ is a special node, we find its corresponding node $\hat{v}$ on $T_{rev}(w)$ in $O(\lg \lg \gamma)$ time by \cite[Lemma 19]{abedin2018heaviest}.
Following the pointer $e_0$ stored at $\hat{v}$, we can find the lowest special skyline node, $\hat{v}'$, above $\hat{v}$.
Note that $\hat{v}'$ is the beginning node of some other LPMEM.
We can find that LPMEM following the pointers $e_1$ and $e_2$ stored at $\hat{v}'$ in $O(1)$ time.
We repeat this procedure to iterate over each special skyline node above $\hat{v}$ until  a node whose pointer $e_1$ pointing to $\alpha_f$ is found.
Each of $\ell, \ell', \ell''$ can be found in $O(\lg \lg \gamma)$ time by 2D orthogonal range successor queries\footnote{The 2D orthogonal range successor query is also used for answering the $\partner$ operation.}, e.g., the leaf index of $\ell$ is the $x$-coordinate of the leftmost point in the query range $[\lMost(\beta_f), \rMost(\beta_f)]\times[\lMost(w_f), \rMost(w_f)]$.
After finding the lowest special node $\hat{v}$, reporting the LPMEMs associated with $\SpecialSkylineList(w_f)$ takes $O(\occ(w_f))$ time, where $\occ(w_f)$ denotes the number of reported LPMEMs.
As there are $k$ different such different $\SpecialSkylineList(w_f)$, finding $\occ_i$ LPMEMs between $\loci_1(i)$ and $\loci_2(i)$ requires $O(k\lg \lg \gamma+\occ_i)=O(\lg \gamma \lg \lg \gamma + \occ_i)$ time, since $k=O(\lg \gamma)$.
Considering there are $m-1$ different pairs of locus, $\loci_1(i)$ and $\loci_2(i)$, finding all LPMEMs between $P$ and $T$ takes $O(m\lg \gamma \lg \lg \gamma + \occ)$ time.
%Recall that finding all pair of locus $\loci_1(i)$ and $\loci_2(i)$ takes $O(m^2+m\lg n)$ time.
The overall query time is $O(m^2+m\lg \gamma \lg \lg \gamma +m\lg n)$, since $\occ=O(m^2)$.
After finding all the LPMEMs, we can use Algorithm \ref{algo_2} to compute the matching statistics.

\Section{Computing MS for a Text Drawn from Constant-Size Alphabet.} 

In the genomic databases, the constant-size texts arise frequently, e.g., the DNA sequence is drawn from $\{A, C, G, T\}$.
When the alphabet size is constant, we can further improve the computation time to be $O(m^2 + m\lg n)$, while maintaining the overall space cost.
In this section, we first give the third solution to the general case such that $T[1..n]$ is drawn from alphabet $[\sigma]$.
In particular, the new solution achieves the improved computation time when $\sigma$ is a constant.
%After that we present the result particularly when $\sigma$ is constant.
The new solution will apply $\rank$ and $\select$ operations from the succinct data structures.

\begin{lemma}\cite{golynski2006rank}
	\label{lem:sel-rank}
	Let $A[1..n']$ be an array of $n'$ characters drawn from alphabet $[\sigma']$. There exists a data structure constructed upon $A$ using $O(n'\lg \sigma')$ bits of space, supporting $\rank_c(A, i)$ queries in $O(\lg \lg \sigma')$ time and $\select_c(A, i)$ queries in constant time, where $\rank_c(A, i)$ counts the number of character $c$ that appears in $A[1..i]$, and $\select_c(A, i)$ gives the position of the $i$-th occurrence of character $c$ in the sequence.
\end{lemma}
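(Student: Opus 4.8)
This lemma is a verbatim restatement of a classical result of Golynski, Munro and Rao, so the plan is simply to cite \cite{golynski2006rank}; no new argument is required. For the sake of a self-contained write-up I describe the structure one would reconstruct. Fix the array $A[1..n']$ over $[\sigma']$. For each symbol $c\in[\sigma']$ let $S_c$ be the increasing list of positions at which $c$ occurs and let $\occ(c)=|S_c|$, so that $\sum_{c}\occ(c)=n'$. The representation is organized in two layers: a coarse layer that records, for every symbol, in which ``chunk'' of the text its occurrences lie, and a fine layer that pins down the exact position within a chunk.

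First I would partition $[1..n']$ into $\lceil n'/\sigma'\rceil$ chunks of $\Theta(\sigma')$ consecutive positions (the degenerate case $n'<\sigma'$ being a single chunk). The coarse layer is a single bit vector which, scanned symbol by symbol, lists for each $c$ the multiplicities of $c$ in the chunks it touches, encoded in unary with separators; this occupies $O(n')$ bits, and one $\rank$/$\select$ query on it returns the chunk that contains the $i$-th occurrence of $c$, as well as the number of occurrences of $c$ lying in chunks strictly preceding a given chunk, in constant time. The fine layer stores, for each chunk, enough to resolve rank/select \emph{inside} that chunk; since a chunk holds only $\Theta(\sigma')$ symbols over an alphabet of size $\Theta(\sigma')$, this is a dense/balanced instance, handled by a sorted-set (predecessor) structure over a universe of size $\Theta(\sigma')$ using $O(\sigma'\lg\sigma')$ bits per chunk, hence $O(n'\lg\sigma')$ bits in total.

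Given these components, $\select_c(A,i)$ asks the coarse layer for the chunk of the $i$-th occurrence of $c$ in $O(1)$ time and then performs a constant-time lookup inside that chunk, for $O(1)$ total. For $\rank_c(A,i)$ I would locate the chunk of position $i$ (immediate from $i$), add the coarse-layer count of $c$ in earlier chunks in $O(1)$ time, and finish with a predecessor query for the offset of $i$ among the positions of $c$ inside that chunk, over a $\Theta(\sigma')$-size universe; using a van Emde Boas / $y$-fast-trie-style structure this last step costs $O(\lg\lg\sigma')$, and it is the only place where a genuine $\lg\lg\sigma'$ factor is incurred.

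The main obstacle is the bookkeeping rather than the idea: one must choose the chunk length and the coarse-layer encoding so that, simultaneously, $\select$ on the chunk index is truly constant time, the per-chunk fine-layer structures sum to $O(n'\lg\sigma')$ bits over all $\lceil n'/\sigma'\rceil$ chunks, and the stated bounds hold uniformly across the regimes $\sigma'=O(1)$ and $\sigma'=\Theta(n')$. Checking that these constraints fit together is exactly the analysis carried out in \cite{golynski2006rank}, which I would invoke rather than reproduce.
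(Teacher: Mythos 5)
Your proposal is correct and matches the paper's treatment: the paper states this lemma purely as a citation of Golynski, Munro, and Rao and offers no proof of its own, so invoking \cite{golynski2006rank} is exactly what is required. Your reconstruction sketch (chunking into $\Theta(\sigma')$-length blocks, a unary-encoded coarse layer for constant-time $\select$, and a predecessor structure yielding the $O(\lg\lg\sigma')$ term for $\rank$) is a faithful outline of that result, but it is supplementary rather than necessary here.
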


As shown in the second solution, whenever $m=\Omega(\lg \gamma \lg \lg \gamma)$, the query time is bounded by $O(m^2+m\lg n)$; therefore, we only need to consider the case that $m=O(\lg \gamma \lg \lg \gamma)$.
We will modify the data structure used in the first solution applying the technique that solves the ball inheritance problem \cite{chan2011orthogonal}.
%The technique to be used is {\em fractional cascading} \cite{chazelle1986fractional}.
%

Before showing the updated data structure, we review the query algorithm in the first solution.
Given a pair of locus $\loci_1(i)$ and $\loci_2(i)$ in $T_{rev}$ and $T_{suf}$ achieved by searching for the longest prefixes of $(P[1..i])^{rev}$ and $P[i+1..m]$ in $T_{rev}$ and $T_{suf}$, respectively, we iterate over each node $v$ on $\path(\loci_1(i), \root_1, T_{rev})$, and compute $\partner(v/\loci_2(i))$ to find the potential longest common prefix between $P[i-\len(\str(v))+1..]$ and $T$.
%the path from $\loci_1(i)$ to the root node of $T_{rev}$, and at each node $v$ that the iteration comes across with we compute $\partner(v/u_2(i))$.
%Each $\partner(v/u_2(i))$ operation takes $O(\lg^{\epsilon} \gamma)$ time by the part a) of Lemma \ref{lemma_range_emptiness} and Lemma \ref{lemma:partner_finding}. %TODO
%Since there are $m$ different pairs of locus, the query time totally is $\sum_{i=1}^{m-1} depth(u_i)\times O(\lg^{\epsilon} \gamma)=O(m^2 \lg^{\epsilon} \gamma)$ time, given that all pairs of locus are available.
As shown as \cite[Lemma 10]{abedin2018heaviest}, operation $\partner(v/\loci_2(i))$ can be reduced to a range emptiness query in the range $[\lMost(v), \rMost(v)]\times[\lMost(\loci_2(i)), \rMost(\loci_2(i))]$, finding the $y$-coordinate of the lowest point (a.k.a. a range successor query) within $[\lMost(v), \lMost(v)]\times(\rMost(\loci_2(i), +\infty)$, and finding the $y$-coordinate of the highest point (a.k.a. a range predecessor query) within $[\lMost(v), \rMost(v)]\times(-\infty, \lMost(\loci_2(i)))$.
%Taking the range successor query within $[\lMost(v), \lMost(v)]\times(\rMost(\loci_2(i), +\infty)$ as an example, 
We observe that: i) as $v$ is changed from $\loci_1(i)$ to $\root_1$, the query range along $y$-axis is fixed;
ii) given any two nodes $s$ and $t$ on $\path(\root_1, \loci_1(i), T_{rev})$, if $s$ is an ancestor of $t$, then $[\lMost(t), \rMost(t)] \subset [\lMost(s), \rMost(s)]$.
These observations can be used to improve the overall query time for multiple $\partner$-finding operations.
We take the predecessor query along $y$-axis within range $[\lMost(v), \rMost(v)]\times(-\infty, \lMost(\loci_2(i)))$ for each node $v$ on $\path(\loci_1(i), \root_1, T_{rev})$ as an example to describe the solution, while the 2D range emptiness queries and 2D range successor queries can be answered similarly.

We number tree levels of $T_{rev}$ incrementally starting from the root
level, which is level 0; refer to the first $\lg^{1+\epsilon} \gamma$ tree levels on the top as {\em active tree levels} for any small constant $\epsilon>0$.
Let $v$ denote any internal node of $T_{rev}$ on some active level;
%since the length $m$ of the query pattern is bounded by $O(\lg \gamma \lg \lg \gamma)$, only those internal nodes could be visited during a query.
%To improve the query time, we look into the implementations of $\partner(v/\loci_2(i))$ from \cite{abedin2018heaviest}.
%The implementation consists of three parts: a range emptiness query in the range $[\lMost(v), \rMost(v)]\times[\lMost(\loci_2(i)), \rMost(\loci_2(i))]$, a range successor query within $[\lMost(v), \lMost(v)]\times(\rMost(\loci_2(i), +\infty)$, and a range predecessor query within $[\lMost(v), \rMost(v)]\times(-\infty, \lMost(\loci_2(i)))$.
let $\size(v)$ denote the number of leaves in the sub-tree rooted by $v$; let $\ell(v)$ denote its tree level.
We associate node $v$ with a sequence $S(v)[1..\size(v)]$ storing the coordinates of the points whose $x$-coordinates in the range $[\lMost(v), \rMost(v)]$ and make sure these points are sorted by their $y$-coordinates.
Given sequences $S(v)$, the predecessor query along $y$-axis within $[\lMost(v), \rMost(v)]\times(-\infty, \lMost(\loci_2(i)))$ can be reduced to finding the predecessor of $\lMost(\loci_2(i))$ in one dimension, and any entry $S(v)[j]$, storing the point coordinates, can be accessed in constant time; however, storing all array $S(v)$'s would occupy $O(\gamma \lg^{1+\epsilon} \gamma)$ words of space.
For saving space, we only store $S(\root_1)$ at the root node, but we give a space-efficient data structure that allows to access the point coordinates of any entry, $S(v)[j]$, in constant time, for any node $v$ on active tree levels.
%And we construct the 

%We consider how to access the coordinates of $S(v)[j]$ in constant time without storing $S(v)$ at the non-root node $v$.
We use the technique that solves the ball inheritance problem in a reversed way.
Let $\tau$ be $\lg^{\epsilon} \gamma$. For simplicity, we assume that both $1/\epsilon$ and $\tau$ are integers.
We assign a color, encoded by some integer, to each active level of $T_{rev}$: Level-0 is colored by $1/\epsilon+1$, Level-$(\lg^{1+\epsilon} \gamma)$ is colored by $0$, while any other Level-$\ell$ is colored by $c(\ell)$, where $c(\ell)=\max$\{$c$ $|$ $(\lg^{1+\epsilon} \gamma - \ell)$ is a multiple of $\tau^c$ and  $0\leq c\leq 1/\epsilon+1$\}.
At each internal node $v$ on active tree levels, we store $(\tau-1)\cdot c(\ell(v))$ arrays of {\em skipping pointers}, denoted by $\sp$.
For each $0\le t\le c(\ell(v))-1$ and $1\le k \le \tau-1$, array $\sp(v, t, k)$ has the same number of entries as of $S(v)$; if point $S(v)[j]$ is stored in any array $S(\cdot)$ associated with nodes at level $\ell(v)+\tau^t\cdot k$, then the $j$-th entry of array $\sp(v, t, k)$ stores the descendant, denoted by $\desc(v, t, k, j)$, of $v$ at level $\ell(v)+\tau^t\cdot k$ containing the point $S(v)[j]$, and the descendant is encoded by its rank among all the descendants of $v$ at the level $\ell(v)+\tau^t\cdot k$ in the left-to-right order; otherwise, entry $\sp(v, t, k)[j]$ is set to be $-1$.
We use Lemma \ref{lem:sel-rank} to support $O(1)$-time $\select$ over array $\sp(v,t, k)$.
Recall that within array $S(v)$ and $S(\desc(v, t, k, j))$, points are ordered by their $y$-coordinates; therefore, a $\select_{\sp(v, t, k)[j]}(\sp(v, t, k), j')$ query returns the array index of point $S(\desc(v, t, k, j))[j']$ in array $S(v)$, for each $1\le j' \le |S(\desc(v, t, k, j))|$. %, which is $j$.
%therefore, given any index $j'$ such that $S(\desc(v, t, k, j))[j'] = \sp(v, t, k)[j]$, a $\select_{\sp(v, t, k)[j]}(\sp(v, t, k), j')$ query returns the array index of point $S(\desc(v, t, k, j))[j']$ in array $S(v)$.
In general, to retrieve the point coordinates of any entry, $S(v)[j]$, we find the lowest ancestor $v'$ of $v$ such that $c(\ell(v'))>c(\ell(v))$, and then use the query $\select_{r(v)}(\sp(v', c(\ell(v)), \frac{\ell(v)-\ell(v')}{\tau^{c(\ell(v))}}), j)$ to locate the array index of point $S(v)[j]$ in array $S(v')$, where $r(v)$ denotes the rank of $v$ among all the descendants of $v'$ at the level $\ell(v)$. 
One hop\footnote{Since there is unique ancestor $v'$ of $v$ that $v$ can hop over to, we simply store a pointer that pointing to $v'$ and the rank $r(v)$ w.r.t. $v'$ at node $v$ in the preprocessing stage.} from $v$ to $v'$ increases the node color by at least one.
Therefore, after at most $1/\epsilon+1 - c(\ell(v))$ hops, we reach the root level, where we can immediately retrieve the point coordinates stored in $S(\root_1)$.
Since each $\select$ query takes constant time, the overall query time is $O(1)$.

We analyze the space cost for storing all $\sp(v, t, k)$'s. 
As the size of the alphabet that $T[1..n]$ is drawn from is $\sigma$, the out-degree of each node in $T_{rev}$ is at most $\sigma$.
There are at most $\min(\sigma^{\tau^{t}\cdot k}, \gamma)$ descendants of $v$ on the tree level $\ell(v)+ \tau^{t}\cdot k$, and the rank of each of them can be encoded within $\lg (\sigma^{\tau^{t}\cdot k})$ bits of space.
Clearly, there are $\frac{\lg^{1+\epsilon} \gamma}{\tau^c}$ active levels colored in $c$ for each $0\le c \le 1/\epsilon+1$.
Fix $t$ and $k$, and the total number of entries in $\sp(v,t, k)$ for all nodes at the same tree level is at most $\gamma$.
The overall space cost in bits is at most,
%all $\sp(v,t, k)$'s at the same tree level have $\gamma$ entries in total.
$$\sum_{c=0}^{1/\epsilon+1} (\frac{\lg^{\epsilon+1} \gamma}{\tau^c} \sum_{t=0}^{c-1} (\sum_{k=1}^{\tau-1} (\gamma\lg (\sigma^{\tau^{t}\cdot k}))))\leq \gamma\lg \sigma\sum_{c=0}^{1/\epsilon+1} (\frac{\lg^{\epsilon+1} \gamma}{\tau^c} \tau^{c-1}\sum_{k=1}^{\tau-1} k)=O(\gamma \lg \sigma\lg^{2\epsilon+1} \gamma).$$
Note that the data structure supporting $\select$ queries upon $\sp(v, t, k)$ has the same space upper bound as the one for storing $\sp(v, t, k)$.
Hence, Lemma \ref{lem:ball_rev} follows.

\begin{lemma}
	\label{lem:ball_rev}
	We can build a data structure of $O(\gamma \lg \sigma\lg^{2\epsilon+1} \gamma)$ bits of space upon $T_{rev}$ such that later, given a node $v$ on any active level, one can find the point coordinates of any entry of $S(v)$ in constant time.
\end{lemma}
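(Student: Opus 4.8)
The plan is to turn the construction described above into a precise recursive \emph{hopping} procedure, prove it is correct and runs in constant time, and then carry out the space accounting. Given a node $v$ on an active level, write $c=c(\ell(v))$ and let $v'$ be the deepest ancestor of $v$ with $c(\ell(v'))>c$; a pointer from $v$ to this $v'$, together with the rank $r(v)$ of $v$ among the descendants of $v'$ at level $\ell(v)$, is stored at preprocessing time. The first step is to pin down exactly which skipping-pointer array at $v'$ to consult. Writing $d(\ell)=\lg^{1+\epsilon}\gamma-\ell$, a level has color $\ge c''$ iff $\tau^{c''}\mid d(\ell)$, so the levels of color $\ge c+1$ are spaced $\tau^{c+1}$ apart; a short modular-arithmetic computation (write $d(\ell(v))=q\tau^c$ with $\tau\nmid q$ and $q\equiv r\pmod\tau$, $1\le r\le\tau-1$) then gives $\ell(v)-\ell(v')=(\tau-r)\,\tau^{c}$. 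Hence with $t=c$ and $k=\tau-r$ we have $1\le k\le\tau-1$ and, since $c(\ell(v'))\ge c+1$, also $0\le t\le c(\ell(v'))-1$, so $\sp(v',t,k)$ is indeed one of the arrays stored at $v'$ and $\ell(v')+\tau^t k=\ell(v)$.

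Next I would verify the index translation. Because $v$ is the descendant of $v'$ at level $\ell(v)$ that contains the point $S(v)[j]$, and $S(v)$ and $S(v')$ are both ordered by $y$-coordinate, the entries of $\sp(v',t,k)$ equal to $r(v)$ occur, in increasing order, exactly at the positions of $S(v)[1],S(v)[2],\dots$ inside $S(v')$; therefore $\select_{r(v)}(\sp(v',t,k),j)$ returns the position of $S(v)[j]$ within $S(v')$, and this is $O(1)$ time by Lemma \ref{lem:sel-rank}. Each hop strictly increases the color by at least one, and the only node of color $1/\epsilon+1$ is $\root_1$ (the only levels with $d(\ell)$ divisible by $\tau^{1/\epsilon+1}=\lg^{1+\epsilon}\gamma$ are $\ell=0$ and $\ell=\lg^{1+\epsilon}\gamma$, and the latter is recolored $0$ by definition), so after at most $1/\epsilon+1=O(1)$ hops we reach the root and read the coordinates off $S(\root_1)$ directly. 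This gives the claimed $O(1)$ query time.

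For the space bound I would argue as in the displayed inequality: for fixed $t$ and $k$ the arrays $\sp(\cdot,t,k)$ over all nodes on one tree level together hold at most $\gamma$ entries, each entry is the rank of a descendant at level $\ell(v)+\tau^t k$ and so occupies $\lg(\sigma^{\tau^t k})$ bits, and only $\lg^{1+\epsilon}\gamma/\tau^{c}$ active levels carry color $c$; summing over $0\le t\le c-1$, $1\le k\le\tau-1$, and the constantly many colors $0\le c\le 1/\epsilon+1$, and using $\sum_{t=0}^{c-1}\tau^t=O(\tau^{c-1})$, $\sum_{k=1}^{\tau-1}k=O(\tau^{2})$ and $\tau=\lg^\epsilon\gamma$, yields $O(\gamma\lg\sigma\lg^{2\epsilon+1}\gamma)$ bits. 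The $\rank$/$\select$ structures of Lemma \ref{lem:sel-rank} built on the $\sp$ arrays occupy asymptotically the same space, so the total is $O(\gamma\lg\sigma\lg^{2\epsilon+1}\gamma)$ bits.

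The main obstacle is the first step: showing that the hop from $v$ always lands on an ancestor $v'$ that actually stores the array $\sp(v',t,k)$ we need, i.e., that the derived indices $t=c(\ell(v))$ and $k=(\ell(v)-\ell(v'))/\tau^{c(\ell(v))}$ always lie in $[0,c(\ell(v'))-1]$ and $[1,\tau-1]$ respectively. This rests entirely on the arithmetic of the coloring function $c(\cdot)$; once that is settled, the $\select$-based index translation, the hop count, and the space sum are all routine.
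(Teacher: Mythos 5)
Your proposal is correct and follows essentially the same route as the paper: the same coloring of active levels, the same hop to the deepest ancestor of strictly larger color via a $\select$ on the skipping-pointer array, the same $O(1/\epsilon)$ bound on the number of hops, and the same level-by-level space summation. The only difference is that you explicitly verify, via the modular arithmetic on $d(\ell)=\lg^{1+\epsilon}\gamma-\ell$, that the indices $t=c(\ell(v))$ and $k=(\ell(v)-\ell(v'))/\tau^{c(\ell(v))}$ always address an array actually stored at $v'$ — a point the paper asserts implicitly — so your write-up is, if anything, more complete.
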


Next, we show the data structure that can support computing multiple $\partner$-finding operations efficiently:
%We construct a $y$-fast trie \cite{willard1983log} upon $S(root_1)$ to support predecessor/successor queries at root node; 
We construct a sequence $R(v)[1..\size(v)]$ at each internal node $v$ on active tree levels such that if the point $S(v)[j]$ is stored in $S(v_s)$ in the next level, where $v_s$ denotes the $s$-th child of $v$ in the left-to-right order, then $R(v)[j]$ is set to be $s$; use Lemma \ref{lem:sel-rank} to support $O(\lg \lg \sigma)$-time $\rank$ over $R(v)$.
Since the out-degree of each node in $T_{rev}$ is at most $\sigma$, $R(v)$ and its associated data structure occupy $O(\size(v)\lg \sigma)$ bits of space.
All $R(v)$'s stored on active tree levels use $O(\gamma \lg \sigma \lg^{1+\epsilon} \gamma)$ bits of space.
% and the $y$-fast trie uses $O(\gamma)$ words.
Finally, we use Lemma \ref{lem:ball_rev} to access the coordinates of any entry $S(v)[j]$, which occupies $O(\gamma \lg \sigma\lg^{2\epsilon+1} \gamma)$ bits of space additionally.

We describe how to find the predecessor along $y$-axis within $[\lMost(v), \rMost(v)]\times(-\infty, \lMost(\loci_2(i)))$ for each node $v$ on $\path(\loci_1(i), \root_1, T_{rev})$.
We traverse through $\path(\loci_1(i), \root_1, T_{rev})$ reversely, i.e., from the root node to $\loci_1(i)$.
At the root node, the index, $j_{\root_1}$, of the proper predecessor of $\lMost(\loci_2(i))$ in $S(\root_1)$ is $\lMost(\loci_2(i))-1$, because $S(\root_1)$ contains all the coordinates of $\gamma$ points in rank space and those points are increasingly sorted by $y$-coordinates.
We immediately return the $y$-coordinate of $S(\root_1)[j_{\root_1}]$ in constant time.
In general, given two nodes $u$ and $v$ on $\path(\loci_1(i), \root_1, T_{rev})$ such that $v$ is the $e$-th child of $u$ in the left-to-right order, if we know the index, $j_u$, of the predecessor of $\lMost(\loci_2(i))$ in $S(u)$, then the index, $j_v$, of the predecessor of $\lMost(\loci_2(i))$ in $S(v)$ can be located in $O(\lg \lg \sigma)$ time by $\rank_e(R(u), j_u)$; and then we use Lemma \ref{lem:ball_rev} to access the $y$-coordinate of $S(v)[j_v]$ in constant time.
As there are $i$ characters in $(P[1..i])^{rev}$, there are at most $i$ nodes on $\path(\loci_1(i), \root_1, T_{rev})$.
Hence, all predecessor queries along the path can be answered in $O(i \lg \lg \sigma)$ time.
As a result, the $\partner(v/\loci_2(i))$ queries for all $v$ on $\path(\loci_1(i), \root_1, T_{rev})$ can be answered in $O(i \lg \lg \sigma)$ time.
Considering there are $m-1$ different pairs of $\loci_1(i)$ and $\loci_2(i)$, the query time of computing MS is $\sum_{i=1}^{m-1}O(i\lg \lg \sigma)=O(m^2\lg \lg \sigma)$ time, given that all pairs of locus are available.
Since finding $m-1$ pairs of locus requires $O(m^2+m\lg n)$ time by Lemma \ref{lemma:loci_finding}, the overall query time is $O(m^2 \lg \lg \sigma+m\lg n)$.
%This completes the third solution:

In the beginning of this section, we assume that $m$, the length of the query pattern, is bounded by $O(\lg \gamma \lg \lg \gamma)$.
As mentioned before, once $m$ is $\Omega(\lg \gamma \lg \lg \gamma)$, we can apply the solution shown in part (ii) of Theorem \ref{theorem_for_all} to compute the matching statistics in $O(m^2+m\lg n)$ time with an $O(\gamma \lg \gamma + \delta \lg \frac{n}{\delta})$ word data structure. 
Combining both solutions yields Theorem \ref{theorem_gamma: MS_2}.

\begin{theorem}
	\label{theorem_gamma: MS_2}
	Given a text $T[1..n]$ drawn from $[\sigma]$, we can build a data structure for $T[1..n]$ with $O(\gamma \lg \gamma + \delta \lg \frac{n}{\delta} + \frac{\gamma}{\log_\sigma n}\lg^{2\epsilon+1} \gamma)$ words of space, for any small constant $\epsilon>0$, such that later, given a pattern $P[1..m]$, we can compute $MS$ for P w.r.t. $T$ in $O(m^2 \lg \lg \sigma + m\lg n)$ time, assuming that the number of bits in a word is $\Omega(\lg n)$.
\end{theorem}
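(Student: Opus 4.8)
The plan is to keep the two data structures developed above side by side and to dispatch at query time according to the length $m$ of the pattern: the structure behind part~(ii) of Theorem~\ref{theorem_for_all} handles long patterns, while a ball-inheritance--augmented version of the first solution handles short patterns. The threshold is $\lg\gamma\lg\lg\gamma$, exactly the regime in which the $m\lg\gamma\lg\lg\gamma$ term of the second solution is already $O(m^2)$.

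For the construction I would first build everything used in part~(ii) of Theorem~\ref{theorem_for_all} --- the Patricia trees $T_{rev}$ and $T_{suf}$, the revised induced sub-trees, the orthogonal-range structures, the string attractor, and the self-index of Lemma~\ref{lemma: self-index} --- at a cost of $O(\gamma\lg\gamma+\delta\lg\frac{n}{\delta})$ words. On top of the same Patricia tree $T_{rev}$ I would add the structures of Lemma~\ref{lem:ball_rev}: the sequence $S(\root_1)$ at the root, the skipping-pointer arrays $\sp(v,t,k)$ with the $\select$ indices of Lemma~\ref{lem:sel-rank} over them, and the $\rank$-supporting arrays $R(v)$, all restricted to the first $\lg^{1+\epsilon}\gamma$ (active) levels of $T_{rev}$, together with the analogous material for the range-emptiness and range-successor reductions of $\partner$. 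By Lemma~\ref{lem:ball_rev} and the bit bounds of Lemma~\ref{lem:sel-rank}, these added arrays occupy $O(\gamma\lg\sigma\,\lg^{2\epsilon+1}\gamma)$ bits; since a machine word has $\Omega(\lg n)$ bits, this is $O\!\left(\tfrac{\gamma}{\log_\sigma n}\lg^{2\epsilon+1}\gamma\right)$ words, and adding it to the $O(\gamma\lg\gamma+\delta\lg\frac{n}{\delta})$ words above gives the claimed space bound.

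For a query I would compare $m$ with $\lg\gamma\lg\lg\gamma$. If $m=\Omega(\lg\gamma\lg\lg\gamma)$, run the algorithm of part~(ii) of Theorem~\ref{theorem_for_all}: it reports all $\occ=O(m^2)$ LPMEMs in $O(m^2+m\lg\gamma\lg\lg\gamma+m\lg n)$ time, which in this regime is $O(m^2+m\lg n)$, and Lemma~\ref{lemma: LPMEMS_to_MS} (i.e.\ ComputingMS\ref{algo_2}) turns them into MS in $O(m^2)$ time; this is within $O(m^2\lg\lg\sigma+m\lg n)$. If $m=O(\lg\gamma\lg\lg\gamma)$, run the augmented first solution: Lemma~\ref{lemma:loci_finding} supplies the $m-1$ pairs of loci in $O(m^2+m\lg n)$ time, and for each $i$ every node on $\path(\loci_1(i),\root_1,T_{rev})$ has depth at most $i<m$, which for the relevant range of $m$ is at most $\lg^{1+\epsilon}\gamma$, hence lies on an active level; so I traverse that path downward from the root, maintaining the index of the relevant predecessor/successor inside the current $S(v)$ by one $\rank$ over $R(v)$ per step (Lemma~\ref{lem:sel-rank}, $O(\lg\lg\sigma)$ time) and reading off point coordinates via Lemma~\ref{lem:ball_rev} in $O(1)$ time. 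This answers all $\partner(v/\loci_2(i))$ queries for a fixed $i$ in $O(i\lg\lg\sigma)$ time, for a total $\sum_{i<m}O(i\lg\lg\sigma)=O(m^2\lg\lg\sigma)$, and with loci-finding the query costs $O(m^2\lg\lg\sigma+m\lg n)$. Either branch is $O(m^2\lg\lg\sigma+m\lg n)$.

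The step needing the most care --- and the main obstacle --- is the bit accounting for the skipping pointers: one must check that, summing over the colour classes of the active levels ($\lg^{1+\epsilon}\gamma/\tau^c$ levels of colour $c$), over the at most $1/\epsilon+1$ hop exponents $t$, and over the $\tau-1$ multiples $k$, while each stored descendant rank needs only $\lg(\sigma^{\tau^t k})$ bits, the resulting geometric series collapses to $O(\gamma\lg\sigma\,\lg^{2\epsilon+1}\gamma)$ bits, and that the $\select$/$\rank$ indices layered on top stay within the same bound. Finally I would note that when $\sigma=O(1)$ the query time becomes $O(m^2+m\lg n)$ and the extra term $\tfrac{\gamma}{\log_\sigma n}\lg^{2\epsilon+1}\gamma$ is $o(\gamma\lg\gamma)$, so the space simplifies to $O(\gamma\lg\gamma+\delta\lg\frac{n}{\delta})$ words, recovering the constant-alphabet bound stated earlier.
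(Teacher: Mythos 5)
Your proposal is correct and follows essentially the same route as the paper: dispatch on $m$ versus $\lg\gamma\lg\lg\gamma$, using the second solution for long patterns and the ball-inheritance--augmented first solution (active levels, skipping pointers with $\select$, arrays $R(v)$ with $\rank$) for short ones, with the same bit-to-word space accounting. The one detail you make more explicit than the paper --- that $m=O(\lg\gamma\lg\lg\gamma)$ guarantees every node on $\path(\loci_1(i),\root_1,T_{rev})$ lies on an active level --- is indeed the reason the active-level restriction suffices.
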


%In particular, when $\sigma$ is a constant, we can achieve Theorem \ref{theorem: constant-sigma} by combining part (ii) of Theorem \ref{theorem_for_all} and Lemma \ref{lemma_gamma: MS_2}.

\begin{corollary}
	\label{coro: constant-sigma}
	Given a text $T[1..n]$ drawn from constant-size alphabet, we can build a data structure for $T[1..n]$ with $O(\gamma \lg \gamma + \delta \lg \frac{n}{\delta})$ words of space, such that later, given $P[1..m]$, we can compute $MS$ for P w.r.t $T$ in $O(m^2 + m\lg n)$ time.
\end{corollary}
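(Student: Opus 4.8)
The plan is to obtain Corollary~\ref{coro: constant-sigma} as an immediate specialization of Theorem~\ref{theorem_gamma: MS_2} to a constant-size alphabet, so that essentially all the work lies in simplifying the stated time and space bounds under the hypothesis $\sigma = O(1)$; no new data structure or algorithm is required. First I would set $\sigma$ to a constant in the query time $O(m^2\lg\lg\sigma + m\lg n)$ of Theorem~\ref{theorem_gamma: MS_2}: since $\lg\lg\sigma = O(1)$, this collapses to $O(m^2 + m\lg n)$, which is exactly the claimed running time, so the time bound needs no further argument.

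Second, I would simplify the space bound $O(\gamma\lg\gamma + \delta\lg\frac{n}{\delta} + \frac{\gamma}{\log_\sigma n}\lg^{2\epsilon+1}\gamma)$ words, for which it suffices to show that the third summand is absorbed into $O(\gamma\lg\gamma)$ for a suitable choice of the constant $\epsilon$ guaranteed by the theorem. Because $\sigma$ is a constant, $\log_\sigma n = \Theta(\lg n)$, and since the attractor has $\gamma \le n$ positions we have $\lg\gamma \le \lg n = O(\log_\sigma n)$, hence $\frac{\gamma}{\log_\sigma n} = O\!\left(\frac{\gamma}{\lg\gamma}\right)$. Consequently $\frac{\gamma}{\log_\sigma n}\lg^{2\epsilon+1}\gamma = O\!\left(\gamma\lg^{2\epsilon}\gamma\right)$, and fixing any $\epsilon < \tfrac12$ gives $\lg^{2\epsilon}\gamma = o(\lg\gamma)$, so this term is $O(\gamma\lg\gamma)$. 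The total is then $O(\gamma\lg\gamma + \delta\lg\frac{n}{\delta})$ words, as claimed, and combining this with the time bound above finishes the proof.

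There is no real obstacle here beyond asymptotic bookkeeping; the one point I would check explicitly is that the auxiliary structures of Section~\ref{sect: future} --- the sequences $S(v)$ and $R(v)$, the skipping-pointer arrays $\sp(v,t,k)$, and the $\rank$/$\select$ structures of Lemma~\ref{lem:sel-rank} --- already appear in Theorem~\ref{theorem_gamma: MS_2} with every $\lg\sigma$ factor made explicit, so when $\sigma = O(1)$ those factors vanish and do not reintroduce any term outside $O(\gamma\lg\gamma)$ words. I would also remark that the case split used for Theorem~\ref{theorem_gamma: MS_2} (invoking the second solution when $m = \Omega(\lg\gamma\lg\lg\gamma)$ and the ball-inheritance-based variant of the first solution otherwise) carries over verbatim, so nothing about the algorithm itself must be re-argued.
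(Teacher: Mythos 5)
Your proposal is correct and matches the paper's (implicit) derivation: Corollary~\ref{coro: constant-sigma} is obtained by specializing Theorem~\ref{theorem_gamma: MS_2} to $\sigma=O(1)$, collapsing $\lg\lg\sigma$ to a constant and absorbing the $\frac{\gamma}{\log_\sigma n}\lg^{2\epsilon+1}\gamma$ term into $O(\gamma\lg\gamma)$ by choosing $\epsilon<\tfrac12$, exactly as you argue. The paper offers no further proof beyond this specialization, so nothing is missing.
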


\paragraph{Acknowledgments.}The author would like to thank Travis Gagie and Meng He for discussing various topics related to the compact data structures, and especially thank Travis for sharing this research topic as a course project. 
The author would also like to thank the anonymous reviewers for their valuable comments and suggestions.

\Section{References}
\bibliographystyle{IEEEbib}
\bibliography{refs}

%\appendix

\end{document}